
\documentclass[11pt]{article}

\usepackage{amsmath}
\usepackage{amsthm}
\usepackage{amssymb}
\usepackage{graphicx}
\graphicspath{ {Figures/} }
\usepackage{color}
\usepackage[margin = 1in]{geometry} 
\usepackage{cite}
\usepackage[toc,page]{appendix}

\newcommand{\val}{\text{val}}
\newcommand{\ex}{\text{ex}}

\newtheorem{theorem}{Theorem}[section]
\newtheorem{lemma}{Lemma}[section]

\usepackage[mathlines, pagewise]{lineno}

\begin{document}
\title{Maximum Integer Flows in Directed Planar Graphs with Multiple Sources and Sinks and Vertex Capacities}

\author{Yipu Wang
			\\[1ex]
	University of Illinois at Urbana-Champaign\\
		{ywang298@illinois.edu}}

\maketitle

\begin{abstract}
	We consider the problem of finding maximum flows in planar graphs with capacities on both vertices and edges and with multiple sources and sinks. 
    We present three algorithms when the capacities are integers. 
    The first algorithm runs in $O(n \log^3 n + kn)$ time when all capacities are bounded, where $n$ is the number of vertices in the graph and $k$ is the number of terminals. This algorithm is the first to solve the vertex-disjoint paths problem in near-linear time when $k$ is bounded but larger than 2. The second algorithm runs in $O(k^2(k^3 + \Delta) n \text{ polylog} (nU))$ time, where $U$ is the largest finite capacity of a single vertex and $\Delta$ is the maximum degree of a vertex. 
    Finally, when $k=3$, we present an algorithm that runs in $O(n \log n)$ time; this algorithm works even when the capacities are arbitrary reals.
    Our algorithms improve on the fastest previously known algorithms when $k$ and $\Delta$ are small and $U$ is bounded by a polynomial in $n$.
    Prior to this result, the fastest algorithms ran in $O(n^2 / \log n)$ time for real capacities and $O(n^{3/2} \log n \log U)$ for integer capacities.
\end{abstract}

\section{Introduction}
Finding a maximum flow in a graph is a well-studied problem with applications in many fields.
The problem remains interesting even in planar graphs, which are graphs that can be embedded in the plane without crossing edges.
Such graphs arise in, for example, road traffic models and VLSI design.

Typically, the maximum flow problem asks us to route some commodity along edges with capacities, which limit the amount of commodity that can go through the edge.
In this paper we are concerned with the case where vertices of the graph also have capacities, which limit the amount of commodity that can go through that vertex.
When all the arc and vertex capacities are unit, we get the {\em vertex-disjoint paths problem}.

In general graphs, adding capacities to the vertices does not make the problem any harder because of a reduction first suggested by Ford and Fulkerson~\cite{FF62}.
For each vertex $v$ with finite capacity $c$, we do the following.
Replace $v$ with two vertices $v_{in}$ and $v_{out}$, and add an arc of capacity $c$ directed from $v_{in}$ to $v_{out}$.
All arcs that were directed into $v$ are directed into $v_{in}$ instead, and all arcs that were directed out of $v$ are directed out of $v_{out}$ instead.
Unfortunately, this reduction does not preserve planarity.
Consider $K_4$, the complete graph on four vertices.
If we apply the reduction of Ford and Fulkerson, we get a (directed) graph whose underlying undirected graph is $K_5$, which is not planar by Kuratowski's Theorem.

Prior work on this problem has focused on the case where there is a single source and sink or when the number of vertices with capacities is bounded.
Khuller and Naor~\cite{KN94} were the first to consider the case where there is a single source and sink. 
Currently, the best known algorithm for this case is due to Kaplan and Nussbaum ~\cite{KN11}, who described an algorithm for maximum flow in directed planar graphs with vertex capacities that runs in $O(n \log n)$ time.
In doing so, they fixed a flaw in a paper of Zhang, Liang and Chen~\cite{ZLC08}.
They also give an algorithm that runs in $O(n)$ time when all vertex and edge capacities are unit, solving the vertex-disjoint paths problem in directed planar graphs with a single source and sink.
Zhang, Liang, and Chen~\cite{ZLC08} described an algorithm that finds a maximum flow in undirected $st$-planar graph in $O(n)$ time. (A planar graph is $st$-planar if the source and the sink are on the same face.)

In the case of multiple sources and sinks, Borradaile et al. give an algorithm that runs in $O(\alpha^3 n \log^3 n)$ time, where $\alpha$ is the number of vertex capacities~\cite{BKMNW17}.
For arbitrary numbers of terminals and vertex capacities, the best-known algorithm prior to this paper uses the Ford-Fulkerson reduction described earlier, connects a super-source to all sources, connects all sinks to a supersink, and then in the resulting graph applies either Goldberg and Rao's algorithm~\cite{GR98} for finding maximum flows in networks with integer capacities or Orlin's algorithm~\cite{O13} for finding maximum flows in sparse graphs with real capacities.
For input graphs in which all vertex and arc capacities are integers, the resulting algorithm runs in $O(n^{3/2}\log n \log U)$ time where $U$ is the largest capacity; for input graphs with real capacities, the resulting algorithm runs in $O(n^2/\log n)$ time.

In this paper, we improve on these algorithms in some special cases by extending Kaplan and Nussbaum's algorithm to certain graphs with multiple sources and sinks.
First, we observe that when there are multiple sources and sinks, applying Kaplan and Nussbaum's algorithm results in a flow that is infeasible at only $k-2$ vertices.
For each of these infeasible vertices, we define the excess of the vertex to be the amount by which it is infeasible, and we show that the some of the excesses of all the infeasible vertices is at most $(k-2)U$.
This means that when $U$ is small, the flow returned by Kaplan and Nussbaum's algorithm is close to feasible.
We exploit this observation to obtain our first algorithm: when $U$ is bounded by a constant, the maximum flow can be found in $O(n \log n + kn)$ time.
When $k$ is bounded, this algorithm solves the vertex-disjoint paths problem in near-linear time.
More generally, we show that if the sum of the excesses of the infeasible vertices is $O(1)$, then we can get rid of the excesses in linear time.

Our second algorithm deals with the case where $U$ may be unbounded. The basic idea is a scaling algorithm.
First we guess the value of the maximum flow using binary search; this increases the running time of the algorithm by a factor $O(\log (nU))$.
Starting with a flow with $k-2$ infeasible vertices, we find a way to improve the flow that decreases the maximum excess of the vertices by some factor that depends only on $k$ and $\Delta$. The improved flow has the same value as the original flow. We show that after $O(k \log (kU))$ improvement phases, each infeasible vertex has excess at most $O(k\Delta)$.
Thus, when $k$ and $\Delta$ are small, we get a flow that is almost feasible.
As in the first algorithm, we exploit this observation to quickly eliminate the excesses to get the desired maximum flow.

Our third algorithm deals with the special case where $k=3$.
In this case, the fact that there is only one infeasible vertex considerably simplifies the problem, since we can just focus on decreasing the excess of this one vertex without worrying about trade-offs. (Roughly speaking, if there is more than one infeasible vertex, we have to consider that decreasing the excess of one vertex could increase the excess of another vertex.) We show that we can modify our second algorithm such that only one improvement phase is necessary.
This third algorithm works even if the capacities are arbitrary real numbers instead of integers.

The outline of this paper is as follows.
In section 2, we give some basic definitions and describe some basic graph constructions that will be used in the paper.
In section 3, we prove the structural properties that show that Kaplan and Nussbaum's algorithm almost works when there are multiple sources and sinks.
In section 4, we describe the algorithm for the case where capacities are bounded integers. 
In section 5, we use this algorithm to solve the case of arbitrary integer capacities.
In section 6, we describe the modifications to the algorithms that are necessary for the case when $k=3$ and the capacities are arbitrary reals.


\section{Preliminaries}
In this paper, $G$ is a simple directed plane graph with vertex set $V(G)$, arc set $E(G)$, and face set $F(G)$.
Let $n$ be the number of vertices in $G$; it is well known that Euler's formula implies $|E(G)| = O(n)$.
For any vertex $v \in V(G)$, let $\deg_G(v)$ denote the degree of $v$ in $G$, and let $\Delta$ be the largest degree in $G$.
If $G$ is a graph and $W \subseteq V(G)$, then $G \setminus W$ is the induced subgraph of $G$ with vertex set $V(G) \setminus W$.
For any integer $N$, let $[N] = \{1, \dots, N\}$.

We use $(u,v)$ to denote an arc or directed edge that is directed from $u$ to $v$.
A {\em path} is a sequence of arcs $((u_1, v_1), \dots, (u_p, v_p))$ such that $v_i = u_{i+1}$ for all $i \in [1, p-1]$.
Such a path {\em starts} at $u_1$ and {\em ends} at $v_p$.
If in addition $v_p = u_1$ then $P$ is a {\em cycle}.
A path $P$ contains a vertex $v$ if one of the edges of $P$ has $v$ as an endpoint.
Thus we will sometimes view paths and cycles as sets of vertices or as sets of arcs instead of as sequences of arcs.
For any $v \in V$, let $in(v) = \{(u,v) \mid (u,v) \in E(G)\}$ be the set of {\em incoming arcs} of $v$, and let $out(v) = \{(v,u) \mid (v,u) \in E(G)\}$ be the set of {\em outgoing arcs} of $v$.
Similarly, if $W$ is a set of vertices, then $in(W) = \{(u,v) \in E(G) \mid u \notin W, v \in W\}$ and $out(W) = \{(u,v) \in E(G) \mid u \in W, v \notin W\}$.

The {\em reversal} of any edge $(u,v)$, denoted $rev((u,v))$, is $(v,u)$.
We may assume without loss of generality that if $e \in E(G)$, then $rev(e) \in E(G)$.
If $P$ is a path $(e_1, \dots, e_p)$, then the {\em reversal} of $P$, denoted $rev(P)$, is $(rev(e_p), \dots, rev(e_1))$.

Two disjoint subsets of $V(G)$ are special: $S$ is a set of {\em sources} and $T$ is a set of {\em sinks} or {\em targets}.
Vertices that are in either $S$ or $T$ are called {\em terminals}.
Let $k$ be the number of terminals.
We may assume without loss of generality that none of the sources have incoming edges and none of the sinks have outgoing edges.

Each arc $e$ has a positive capacity $c(e)$ and each non-terminal vertex $v$ has a positive capacity $c(v)$.
Capacities may be infinite, and we can assume without loss of generality that terminals have infinite capacity: if a source $s$ has finite capacity $c$, then we can add a node $s'$, an edge $(s', s)$ of capacity $c$, replace $s$ with $s'$ in $S$, and let $s'$ have infinite capacity, all while preserving planarity.
A similar reduction eliminates finite capacities on the sinks.

{\bf Flows.}
A {\em flow network} is a directed graph that has a capacity on each arc and vertex, a set of sources, and a set of sinks.
Suppose $G$ is a flow network with capacity function $c : E(G) \cup V(G) \to [0, \infty)$, source set $S$, and target set $T$. 
Let $f : E(G) \to [0, \infty)$. 
To lighten notation, in this paper we will write $f(u,v)$ instead of $f((u,v))$ for any arc $(u,v)$. 
For each vertex $v$, let 
\[
	f^{in}(v) = \sum_{e \in in(v)} f(e) \;\;\; \text{and} \;\;\; f^{out}(v) = \sum_{e \in out(v)} f(e).
\]
Similarly, if $W$ is a set of vertices, then let
\[
	f^{in}(W) = \sum_{e \in in(W)} f(e) \;\;\; \text{and} \;\;\; f^{out}(W) = \sum_{e \in out(W)} f(e).
\]
The function $f$ is a {\em flow in $G$} if it satisfies the following {\em flow conservation constraints}:
\[
	f^{in}(v) = f^{out}(v)  \;\;\; \forall v \in V(G) \setminus  (S \cup T)
\]

A flow is {\em feasible} if in addition it satisfies the following two types of constraints:
\[
	\begin{array}{ll}
		0 \leq f(e) \leq c(e) & \forall e \in E(G)\\
		f^{in}(v) \leq c(v) & \forall v \in V(G) \setminus (S \cup T)\\
	\end{array}
\]
Constraints of the first type are {\em arc capacity constraints} and those of the second type are {\em vertex capacity constraints}.
A flow $f$ {\em routes} $f(e)$ units of flow through the arc $e$. 
An arc $e \in in(v)$ {\em carries flow into} $v$ if $f(e) > 0$, and an arc $e' \in out(v)$ {\em carries flow out of} $v$ if $f(e') > 0$.
We assume that $\min\{f(e), f(rev(e)\} = 0$ for every edge $e$.

In the {\em maximum flow problem}, we are trying to find a feasible flow $f$ with maximum {\em value}, where the value $v(f)$ of a flow $f$ is defined as
\[
	v(f) = \sum_{s \in S} f^{out}(s).
\]
When all the vertex and arc capacities are 1, the maximum flow problem becomes the {\em vertex-disjojint paths problem}.

Let $\val(G)$ be the value of the maximum flow in a flow network $G$ (which may have vertex capacities).
A {\em circulation} is a flow of value 0.
A circulation $g$ is {\em simple} if $g^{in}(v) = g^{out}(v)$ for every terminal $v$.
Non-simple circulations only exist if there are more than two terminals.
A flow $f$ has a {\em flow cycle} $C$ if $C$ is a cycle and $f(e) > 0$ for every arc $e$ in $C$, and $f$ is {\em acyclic} if it has no flow cycles.
A flow cycle $C$ of a flow $f$ is {\em unit} if $f(e) = 1$ for every arc $e$ in $C$.
A flow $f$ {\em saturates} an arc $e$ if $f(e) = c(e)$.
A flow is a {\em path-flow} if its support is a path.

We will often add two flows $f$ and $g$ together to obtain a flow $f + g$, or multiply a flow $f$ by some constant $c$ to get a flow $cf$.
These operations are defined in the obvious way: for every arc $e$, we have
\begin{align*}
	(f+g)(e) &= \max\{0, f(e) + g(e) - f(rev(e)) - g(rev(e))\}\\
	(cf)(e) &= c \cdot f(e)
\end{align*}

{\bf Apex graphs and $G_{st}$.}
A graph $G$ is a {\em $k$-apex graph} if there are at most $k$ vertices whose removal from the graph would make $G$ planar.
These $k$ vertices are called {\em apices}. 

Given a flow network with multiple sources and sinks, we can reduce the maximum flow problem to the single-source, single-sink case by adding a supersource $s$, supersink $t$, infinite-capacity arcs $(s, s_i)$ for every $s_i \in S$, and infinite-capacity arcs $(t_i, t)$ for every $t_i \in T$.
Call the resulting flow network $G_{st}$.
Finding a maximum flow in the original network $G$ is equivalent to finding a maximum flow from $s$ to $t$ in $G_{st}$.
The graph $G_{st}$ is not necessarily planar but is a 2-apex graph.

{\bf The flow graph $f_G$.}
Given a flow $f$ in a flow network $G$, the {\em flow graph} of $f$ is a graph $f_G$ with the same vertex and arc set as $G$, but each arc $e$ in $f_G$ has weight $f(e)$.
Depending on the context, we will interpret these arc weights as either capacities or flow.

{\bf The extended graph $G^\circ$.} Given a flow network $G$ with vertex capacities, Kaplan and Nussbaum \cite{KN11} defined the {\em extended graph} $G^\circ$ based on constructions of Khuller and Naor~\cite{KN94}, Zhang, Liang, and Jiang~\cite{ZLJ06}, and Zhang, Liang, and Chen~\cite{ZLC08}. 
Starting with $G_{st}$, we replace each finitely capacitated vertex $v \in V(G_{st})$ with an undirected cycle of $d$ vertices $v_1, \dots, v_d$, where $d$ is the degree of $v$. 
Each edge in the cycle has capacity $c(v)/2$. (An undirected edge $e$ with capacity $c(e)$ can be viewed as two arcs $e$ and $rev(e)$, each with capacity $c(e)$, so $G^\circ$ can be viewed as a directed flow network.)
We make every edge that was incident to $v$ incident to some vertex $v_i$ instead, such that each edge is connected to a different vertex $v_i$, the clockwise order of the edges is preserved, and the graph remains planar. 
We also identify the new arc $(u,v_i)$ or $(v_i, u)$ with the old arc $(u,v)$ or $(v,u)$ and denote the cycle replacing $v$ by $C_v$. The graph $G^\circ$ has $O(n)$ vertices and arcs.
See Figure~\ref{F:auxiliary-graphs}.

This idea of eliminating vertex capacities in planar graphs by replacing each vertex with a cycle has also been used in the context of finding shortest vertex-disjoint paths in planar graphs~\cite{CdVS11}.	
    
    {\bf The graph $\overline{G}$.} 
    Given a flow network $G$ with vertex capacities, let $\overline{G}$ be the flow network obtained as follows: Starting with $G_{st}$, replace each capacitated vertex $v$ with two vertices $v^{in}$ and $v^{out}$, and add an arc of capacity $c(v)$ directed from $v^{in}$ to $v^{out}$.
All arcs that were directed into $v$ are directed into $v^{in}$ instead, and all arcs that were directed out of $v$ are directed out of $v^{out}$ instead.
See Figure~\ref{F:auxiliary-graphs}. It is well known that every feasible flow in $G_{st}$ f corresponds to a feasible flow in $\overline{G}$ of the same value, and vice versa.
The graph $\overline{G}$ has $O(n)$ vertices and arcs.
\begin{figure}
\centering
\begin{tabular}{cr@{\qquad}cr@{\qquad}cr}
	\includegraphics[scale=0.5]{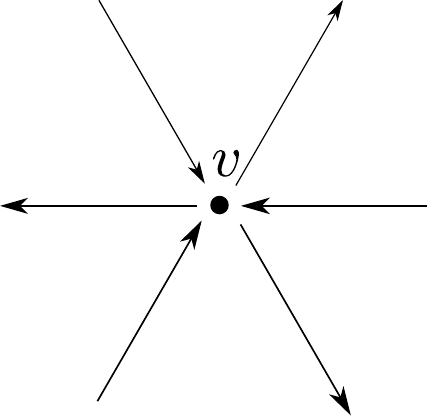} & \hspace{-0.25in}(a)
	&
	\includegraphics[scale=0.5]{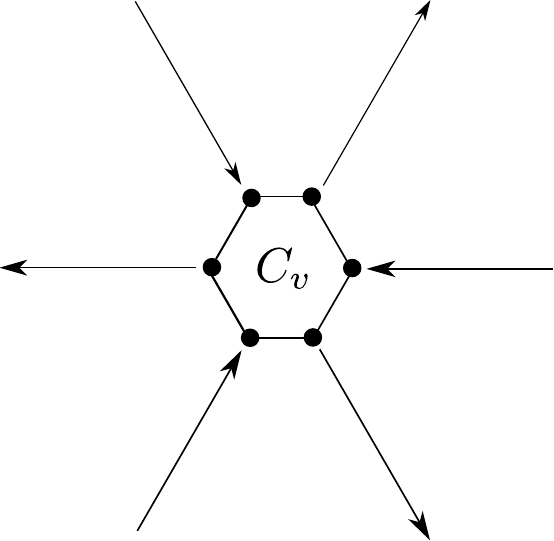} & \hspace{-0.25in}(b)
    &
	\includegraphics[scale=0.5]{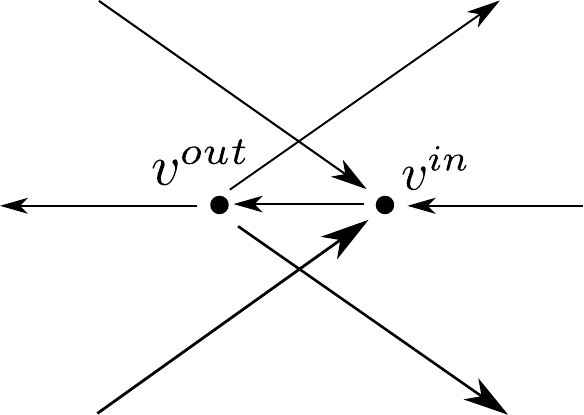} & \hspace{-0.25in}(c)
\end{tabular}
\caption{(a) capacitated vertex $v \in G$ with capacity $c(v)$ (b) corresponding cycle $C_v$ in $G^\circ$; each arc in $C_v$ has capacity $c(v)/2$ (c) corresponding arc $(v^{in}, v^{out})$ in $\overline{G}$ with capacity $c(v)$}
\label{F:auxiliary-graphs}
\end{figure}

{\bf Restrictions and extensions.}  
Suppose $G$ and $H$ are flow networks such that every arc in $G$ is also an arc in $H$. 
If $f'$ is a flow in $H$, then the {\em restriction} of $f'$ to $G$ is the flow $f$ in $G$ defined by $f(e) = f'(e)$ for all arcs $e \in E(G)$.
Conversely, if $f$ is a flow in $G$, then an {\em extension} of $f$ is any flow $f'$ in $H$ such that $f(e) = f'(e)$ for every edge $e \in E(G)$.

Every arc in $G$ or $G_{st}$ is an arc in both $\overline{G}$ and $G^\circ$.
Every feasible flow in $\overline{G}$ has a feasible restriction in $G$.
Conversely, every feasible flow $f$ in $G$ has a feasible extension $\overline{f}$ in $\overline{G}$, by defining $\overline{f}(v^{in}, v^{out}) = f^{in}(v)$.
Every feasible flow in $G^\circ$ has a restriction in $G$; this restriction is a flow but is not necessarily feasible.  
On the other hand, we have the following lemma:
\begin{lemma}\label{L:extendable}
	 Every feasible flow $f$ in $G$ has an extension $f^\circ$ that is feasible in $G^\circ$.
\end{lemma}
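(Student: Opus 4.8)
The plan is to build $f^\circ$ locally: keep $f^\circ(e) = f(e)$ on every arc that already belongs to $G$, extend trivially across the supersource/supersink arcs of $G_{st}$, and then, for each finitely capacitated vertex $v$, define $f^\circ$ on the arcs of the replacement cycle $C_v$ so that flow conservation holds at each new vertex $v_1, \dots, v_d$ while every cycle arc stays within its capacity $c(v)/2$. Since terminals keep infinite capacity and are not replaced by cycles, and since the arcs $(s, s_i)$ and $(t_i, t)$ are uncapacitated, I first set $f^\circ(s, s_i) = f^{out}(s_i)$ and $f^\circ(t_i, t) = f^{in}(t_i)$, which makes $f^\circ$ conserve flow at each $s_i$ and $t_i$ with no capacity violation. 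Because the cycles share no arcs and all arcs outside the cycles retain the feasible values of $f$, it then suffices to treat each cycle independently.

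Fix a capacitated vertex $v$ with cycle vertices $v_1, \dots, v_d$ in cyclic order, and let $b_i$ be the net external flow into $v_i$ (inflow minus outflow along the original arcs attached to $v_i$). Flow conservation for $f$ at $v$ gives $\sum_{i=1}^{d} b_i = f^{in}(v) - f^{out}(v) = 0$. I look for net flows $x_i$ on the cycle arc joining $v_i$ to $v_{i+1}$ (indices mod $d$) satisfying the conservation equation $x_i - x_{i-1} = b_i$ at every $v_i$. Writing $p_i = \sum_{j=1}^{i} b_j$ for the partial sums (so $p_0 = p_d = 0$), the general solution is $x_i = a + p_i$ for a free constant $a$, which represents the circulation we may add around the cycle. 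Once $x_i$ is chosen, I orient it by routing $|x_i|$ along the appropriate one of the two arcs between $v_i$ and $v_{i+1}$ and setting the reverse arc to $0$, which preserves $\min\{f^\circ(e), f^\circ(\rev(e))\} = 0$. The arc capacity constraints then reduce to requiring $|x_i| = |a + p_i| \le c(v)/2$ for all $i$, i.e. to choosing $a$ so that every $a + p_i$ lies in $[-c(v)/2,\, c(v)/2]$.

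Such an $a$ exists if and only if $\max_i p_i - \min_i p_i \le c(v)$, and I would center the range by taking $a = -\tfrac12(\max_i p_i + \min_i p_i)$. The key estimate, and the main thing to verify, is that the range of the partial sums is at most $c(v)$. I would argue this by viewing $(p_i)$ as a closed walk on the line (it starts and ends at $0$): between the index achieving $\max_i p_i$ and the index achieving $\min_i p_i$ the walk must move monotonically by at least the range in one direction, so the total movement of that sign over that stretch is at least the range; but the total positive movement over the whole cycle equals $\sum_{b_i > 0} b_i \le f^{in}(v)$, and the total negative movement equals it since the walk is closed. Hence $\max_i p_i - \min_i p_i \le f^{in}(v) \le c(v)$, so a valid $a$ exists. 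With $a$ fixed, conservation holds at every $v_i$ because $x_i - x_{i-1} = b_i$, the cycle arc capacities hold by construction, and all original arcs already satisfy their (unchanged) capacities as $f$ is feasible in $G$; therefore $f^\circ$ is feasible in $G^\circ$.

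I expect the cyclic walk / range bound to be the only real obstacle. One must use both that conservation at $v$ forces the partial sums to close up and that the throughput $f^{in}(v)$ itself — not twice the throughput — bounds the range; this is exactly what makes the capacity $c(v)/2$ on each cycle arc sufficient rather than forcing $c(v)$. The remainder is bookkeeping: independence of the cycles lets me handle them one at a time, and the terminal and supersource/supersink arcs are handled for free by their infinite capacity.
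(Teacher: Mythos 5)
Your proof is correct, but it takes a genuinely different route from the paper's. The paper invokes the flow decomposition theorem: it writes $f$ as a sum of path- and cycle-flows and, for each constituent $f_i$ of value $u_i$ passing through a capacitated vertex $w$, routes $u_i/2$ clockwise and $u_i/2$ counterclockwise around $C_w$ between the entry and exit vertices; feasibility follows because every cycle arc receives total weight at most $\sum_i u_i/2 = f^{in}(w)/2 \le c(w)/2$. You instead work purely locally on each cycle: you observe that the conservation equations $x_i - x_{i-1} = b_i$ have a one-parameter family of solutions $x_i = a + p_i$ (the free constant $a$ being a circulation around $C_v$), and you choose $a$ to center the partial sums, which works because $\max_i p_i - \min_i p_i \le \sum_{b_i>0} b_i = f^{in}(v) \le c(v)$. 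Your phrasing of the range bound (``the walk must move monotonically'') is loose --- the walk need not be monotone --- but the underlying estimate is sound: the net change from the index achieving the minimum to the index achieving the maximum is at most the total positive increment along that stretch, hence at most $f^{in}(v)$. What your approach buys is that it avoids the decomposition theorem entirely and is immediately constructive in $O(\deg_G(v))$ time per vertex, which is lighter than the paper's separate algorithmic version (which reduces the extension step to a multi-source planar max-flow computation); what the paper's approach buys is brevity, since the $c(v)/2$ bound falls out of the half-and-half routing with no case analysis on where the partial sums peak.
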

\begin{proof}
	We use the well-known flow decomposition theorem, which states that any flow $f$ in $G$ can be decomposed into a sum of flows $f_1, \dots, f_m$ such that for each $i$, the support of $f_i$ is either a cycle or a path from a source to a sink. 
    For each $i \in [m]$, let $p_i$ be the support of $f_i$ and let $u_i = v(f_i)$.
    
    For each capacitated vertex $w \in G$, we define $f^\circ$ on the cycle $C_w$ in $G^\circ$ as follows: for each $i \in [m]$, if some edge in $p_i$ carries $u_i$ units of flow into a vertex $x$ on $C_w$ and another edge in $p_i$ carries $u_i$ units of flow out of a vertex $x'$ on $C_w$, then we route $u_i/2$ units of flow clockwise along $C_w$ from $x$ to $x'$ and $u_i/2$ units of flow counter-clockwise along $C_w$ from $x$ to $x'$.
    It is easy to see that $f^\circ$ satisfies conservation constraints.
    Since $f^{in}(C_w) \leq c(w)$, no arc on $C_w$ carries more than $c(w)/2$ units of flow, so $f^\circ$ is feasible.
\end{proof}

We now describe how to convert a feasible flow $f$ in $G$ to a feasible extension $f^\circ$ of $f$ to $G^\circ$.
We must define $f^\circ(e) = f(e)$ for all arcs $e \in E(G)$.
We reduce the problem of finding $f^\circ$ on all other arcs to finding a flow in a flow network $H$.
Let $H$ be the subgraph of $G^\circ$ consisting of all cycles $C_v$ where $v$ is a capacitated vertex in $G$; it suffices to define $f^\circ$ on the arcs of $H$.
Recall that for all $v \in V(G)$, the vertices in $C_v$ are $v_1, \dots, v_d$ in clockwise order, where $d = \deg_G(v)$.
For each vertex $v_i$ in $H$, let $e_{i,v}$ be the unique arc in $G$ incident to $v_i$.
When it is clear what vertex $v$ is, we will write $e_i$ instead of $e_{i,v}$.
For each $v \in V(G)$ and $i \in [\deg_G(v)]$, let 
\[ 
	demand(v_i) = 
	\begin{cases} 
      -f(e_i) & \text{ if } e_i \in in(v_i) \\
      f(e_i) & \text{ if } e_i \in out(v_i) 
   \end{cases}
\]
That is, $demand(x_i)$ is the net amount of flow that $f^\circ$ carries out of $v_i$ so far.
For each vertex $v_i$ such that $demand(v_i)$ is negative, let $v_i$ be a source in $H$; similarly, if $demand(v_i)$ is positive, let $v_i$ be a sink in $H$.
For each $v \in V(G)$, $\sum_{i=1}^{\deg_G(v)} demand(v_i) = 0$.
See Figure~\ref{F:demands}.

\begin{figure}
\centering
\begin{tabular}{cr@{\qquad}cr@{\qquad}cr}
	\includegraphics[scale=0.75]{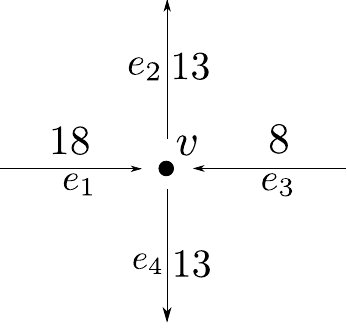} & \hspace{-0.25in} (a)
	&
	\includegraphics[scale=0.5]{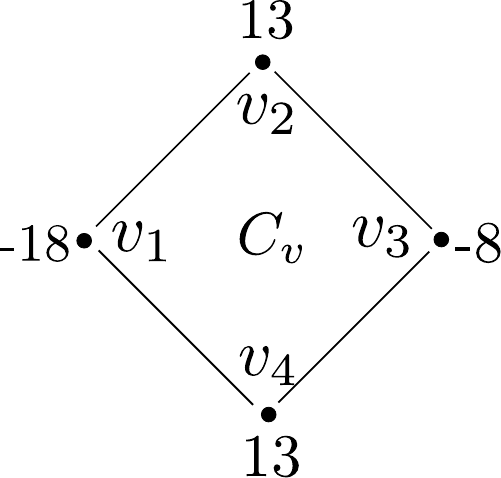} & (b)
\end{tabular}
\caption{Extending a flow from $G$ to $G^\circ$. (a) An example of $f$ at $v$; arcs are labeled with their flow values (b) $H$ at $C_v$ with terminals labeled with their demand values; $v_1$ and $v_3$ are sources; $v_2$ and $v_4$ are sinks}
\label{F:demands}
\end{figure}

By Lemma~\ref{L:extendable}, there exists a flow $f_H$ in $H$ such that $f_H^{out}(v_i) = -demand(v_i)$ for every source $v_i$ and $f_H^{in}(v_i) = demand(v_i)$ for every sink $v_i$.
To actually find $f_H$, we do the following. For each source $v_i$ in $H$, we add a vertex $v_i'$ that will be a source instead of $v_i$, and we add an arc $(v_i', v_i)$ with capacity $-demand(v_i)$; similarly, for each sink $v_j$ in $H$, we add a vertex $v_j'$ that will be a sink instead of $v_j$, and we add an arc $(v_j, v_j')$ with capacity $demand(v_j)$.
Then $f_H$ is an acyclic maximum flow in the resulting network.
The restriction of $f_H$ to $H$ is exactly $f^\circ$ on the arcs of $H$.
Finding $f_H$ requires finding a maximum flow in a planar graph with multiple sources and sinks, which can be done in $O(n \log^3 n)$ time using the algorithm of Borradaile et al.~\cite{BKMNW17}.

{\bf The residual graph.}
If $f$ is a flow in a flow network $G$ with capacity function $c$ and without vertex capacities, then the {\em residual capacity} of an arc $e$ with respect to $f$ and $c$, denoted $c_f(e)$, is $c(e) - f(e) + f(rev(e))$.
The {\em residual graph of $G$ with respect to $f$ and $c$} (or just the {\em residual graph of $G$ with respect to $f$} when $c$ is the capacity function given as input) has the same vertices and arcs as $G$, but each arc $e$ has capacity $c_f(e)$.
A {\em residual edge} of $G$ with respect to $f$ is an edge with positive residual capacity, a {\em residual path} is a path made up of residual edges, and a {\em residual cycle} is a cycle made up of residual edges.
It is well known that a flow $f$ is a maximum flow in a graph $G$ if the residual graph of $G$ with respect to $f$ does not have any residual paths from a source to a sink.

{\bf Fractional and integer flows.}
A flow $f^\circ$ in $G^\circ$ is an {\em integer flow} if $f^\circ(e)$ is an integer for every arc $e$ in $G^\circ$; otherwise, $f^\circ$ is {\em fractional}. 
The following lemma is well-known; for a proof, see Appendix~\ref{A:fractional}.
\begin{lemma}\label{L:fractional}
	Let $f^\circ$ be a fractional flow in a flow network $G^\circ$ such that $v(f^\circ)$ is an integer, $G^\circ$ has integer arc capacities, and $G^\circ$ has no vertex capacities. Then there exists an integer flow $f_1^\circ$ in $G^\circ$ of the same value as $f$ such that $|f^\circ(e) - f_1^\circ(e)| < 1$ for every arc $e$ in $G^\circ$.
\end{lemma}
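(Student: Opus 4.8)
The plan is to reduce the statement to rounding a \emph{circulation} and then to round that circulation by repeatedly cancelling cycles supported on the fractionally-valued arcs. All of the quantitative control will come from a single invariant: every intermediate flow I produce will satisfy $\lfloor f^\circ(e) \rfloor \le f_1^\circ(e) \le \lceil f^\circ(e) \rceil$ for every arc $e$. This invariant immediately yields the conclusion, since if $f^\circ(e)$ is an integer then the two bounds coincide and force $f_1^\circ(e) = f^\circ(e)$, while if $f^\circ(e)$ is fractional then $f_1^\circ(e)$ lies strictly within distance $1$ of $f^\circ(e)$; either way $|f^\circ(e) - f_1^\circ(e)| < 1$. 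The invariant also preserves feasibility for free: because the arc capacities are integers and $f^\circ(e) \le c(e)$, we have $\lceil f^\circ(e) \rceil \le c(e)$, while trivially $\lfloor f^\circ(e) \rfloor \ge 0$.

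First I would turn $f^\circ$ into a circulation in which the value is carried on a single integer-valued arc. Add a super-source $s^*$ joined to every source of $G^\circ$, a super-sink $t^*$ joined from every sink, extend $f^\circ$ along these new arcs in the obvious way, and give each new arc a large integer capacity. Then add a \emph{return arc} $(t^*, s^*)$ and route $v(f^\circ)$ units along it. Since $v(f^\circ)$ is an integer by hypothesis, the return arc carries an integer amount and flow conservation now holds at every vertex, so the extended function $g$ is a circulation. Conservation at $s^*$ ties the total flow leaving $s^*$ to the integer flow on the return arc, so as long as the return arc stays fixed the value of the restricted flow is preserved; this is exactly why it is convenient to pin the value onto one integer arc.

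The heart of the argument is the rounding of $g$. Let $H$ be the subgraph consisting of the arcs $e$ with $g(e) \notin \mathbb{Z}$, viewed as undirected. I claim no vertex of $H$ has undirected degree exactly $1$: if a vertex $v$ were incident to a single fractional arc $e$, then flow conservation at $v$ would express $g(e)$ as a signed sum of the (integer) flows on the remaining arcs, forcing $g(e)$ to be an integer, a contradiction. Hence every non-isolated vertex of $H$ has degree at least $2$, so $H$ contains an undirected cycle $C$. Fixing an orientation of $C$, I increase the flow by $\delta$ on the arcs of $C$ that agree with the orientation while decreasing it by $\delta$ on those that oppose it; this change preserves conservation at every vertex of $C$ and so keeps $g$ a circulation. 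I then increase $|\delta|$ until the first arc of $C$ reaches its floor or ceiling, at which point that arc becomes integral. Every touched arc started strictly inside $(\lfloor g(e) \rfloor, \lceil g(e) \rceil)$ and is stopped at a bound, so the invariant is maintained and the number of fractional arcs strictly decreases. Repeating drives $g$ to an integer circulation $g_1$, and restricting $g_1$ back to $G^\circ$ gives the desired $f_1^\circ$.

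The step I expect to require the most care is the degree argument for $H$ together with the bookkeeping that the floor/ceiling interval of each arc is never violated across iterations. Because an arc is frozen once it leaves $H$, each arc's value stays within the interval $[\lfloor g(e) \rfloor, \lceil g(e) \rceil]$ computed from the \emph{original} $g$, which is what makes both the final distance bound and the preservation of the return arc (hence of the value) go through; I would state this interval invariant explicitly and verify that a single cycle cancellation cannot push any arc outside it. A secondary point to check is that the $\pm\delta$ update respects the convention $\min\{f(e), f(\rev(e))\} = 0$, which is handled by interpreting the update as a change in the net flow across each arc.
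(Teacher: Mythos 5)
Your proof is correct and follows essentially the same route as the paper: both arguments repeatedly cancel cycles supported on the fractionally-valued arcs while keeping every arc within $[\lfloor f^\circ(e)\rfloor,\lceil f^\circ(e)\rceil]$, and both justify the existence of such a cycle by observing that a degree-one vertex of the fractional subgraph would contradict either flow conservation or the integrality of $v(f^\circ)$. The only cosmetic difference is that you first close the flow into a circulation with an integer return arc, whereas the paper handles the terminals directly by taking a maximal path in its ``fractional residual graph'' and deriving the same contradiction at its endpoint.
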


We now describe how to convert a fractional flow $f^\circ$ in $G^\circ$ to an integer flow $f_1^\circ$ in $G^\circ$ of the same value, assuming that $v(f^\circ)$ is an integer, $G^\circ$ has integer arc capacities, and $G^\circ$ has no vertex capacities.
By Lemma~\ref{L:fractional}, we can assume that $|f^\circ(e) - f_1^\circ(e)| < 1$ for every arc $e$ in $G^\circ$.
Thus we initially define $f_1^\circ(e) = \lfloor f^\circ(e) \rfloor$ for all arcs $e$ in $G^\circ$; now we just need to increase $f_1^\circ(e)$ by 1 for some arcs $e$ in order to make $f_1^\circ$ satisfy conservation constraints and to make $v(f_1^\circ) = v(f^\circ)$.

We reduce the problem of fixing $f_1^\circ$ to finding a flow in a flow network $H$.
Let $H$ be the subgraph of $G^\circ$ consisting of all arcs $e$ where $f^\circ(e)$ has a non-zero fractional part.
All arcs in $H$ have capacity 1.
For each vertex $v$ in $G^\circ$, let 
\[ 
	demand(v) = 
    \begin{cases} 
      (f_1^\circ)^{out}(v) - (f_1^\circ)^{in}(v) & \text{ if } v \notin \{s,t\} \\
      (f_1^\circ)^{out}(v) - (f_1^\circ)^{in}(v) - v(f) & \text{ if } v = s \\
      (f_1^\circ)^{out}(v) - (f_1^\circ)^{in}(v) + v(f) & \text{ if } v = t 
   \end{cases}
\]
That is, $demand(v)$ is the net amount of flow that $f_1^\circ$ carries out of $v$ so far, minus the net amount of flow that $f_1^\circ$ is supposed to carry out of $v$.
For each vertex $v$ such that $demand(v)$ is negative, let $v$ be a source in $H$; similarly, if $demand(v)$ is positive, let $v$ be a sink in $H$.
We have $\sum_{v \in V(G^\circ)} demand(v) = 0$.

By Lemma~\ref{L:fractional}, there exists a flow $f_H$ in $H$ such that $f_H^{out}(v) = -demand(v)$ for every source $v$ and $f_H^{in}(v) = demand(v)$ for every sink $v$.
To actually find $f_H$, we do the following. For each source $v$ in $H$, we add a vertex $v'$ that will be a source instead of $v$, and we add an arc $(v', v)$ with capacity $-demand(v)$; similarly, for each sink $v$ in $H$, we add a vertex $v'$ that will be a sink instead of $v$, and we add an arc $(v, v')$ with capacity $demand(v)$.
We set $f_H$ to be the restriction to $H$ of any acyclic maximum flow in the resulting network.
To fix $f_1^\circ$, we just need to replace it with $f_1^\circ + f_H$. 
Finding $f_H$ requires finding a maximum flow in a 2-apex graph with multiple sources and sinks, which can be done in $O(n \log^3 n)$ time using the algorithm of Borradaile et al.~\cite{BKMNW17}.
We have proved the following lemma.
\begin{lemma}\label{L:fractional-alg}
	Let $f^\circ$ be a fractional flow in a flow network $G^\circ$ such that $v(f)$ is an integer, $G^\circ$ has integer arc capacities, and $G^\circ$ has no vertex capacities. Then in $O(n \log^3 n)$ time we can find an integer flow $f_1^\circ$ of the same value as $f$ such that $|f^\circ(e) - f_1^\circ(e)| < 1$ for every arc $e$ in $G^\circ$.
\end{lemma}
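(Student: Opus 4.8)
The plan is to make the existence statement of Lemma~\ref{L:fractional} algorithmic by reducing the rounding task to a single maximum-flow computation on a small auxiliary network. First I would set $f_1^\circ(e) = \lfloor f^\circ(e) \rfloor$ on every arc. This integer assignment coincides with $f^\circ$ wherever $f^\circ$ was already integral and sits strictly below it elsewhere, but it generally violates conservation and has the wrong value. I would record the resulting imbalance at each vertex as $demand(v)$, the net outflow that the current $f_1^\circ$ carries, adjusting $s$ and $t$ by $\mp v(f^\circ)$ so that honoring the demands simultaneously restores conservation and forces the value back to $v(f^\circ)$. These demands sum to zero, so the remaining job is to add an integral correction, taking value $0$ or $1$ on each arc and supported only on the arcs where $f^\circ$ had a nonzero fractional part, that routes exactly these demands.

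I would formalize the correction as a flow in the network $H$, the subgraph of $G^\circ$ induced by the fractional arcs, each given capacity $1$. To extract the desired $f_H$, I would attach to each source vertex $v$ an auxiliary vertex $v'$ with an arc $(v',v)$ of capacity $-demand(v)$, and symmetrically attach a sink arc of capacity $demand(v)$ at each sink; an acyclic integral maximum flow in this augmented network restricts on $H$ to $f_H$, provided the max flow saturates all the new demand arcs. Setting $f_1^\circ \leftarrow f_1^\circ + f_H$ then yields an integer flow satisfying conservation with value $v(f^\circ)$, and since $f_H(e)\in\{0,1\}$ only on the fractional arcs we get $f^\circ(e)-f_1^\circ(e) \in (-1,1)$, hence $|f^\circ(e)-f_1^\circ(e)|<1$, on every arc.

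The crux of the argument, and the step I expect to be the main obstacle, is showing that the augmented network admits a flow saturating all demand arcs, so that the maximum flow realizes every $demand(v)$ rather than only part of it. This is exactly where Lemma~\ref{L:fractional} does the real work: it supplies an integer flow $h$ of value $v(f^\circ)$ with $|f^\circ(e)-h(e)|<1$ for all $e$. Because $h(e)$ is an integer within distance $1$ of $f^\circ(e)$, the difference $h(e)-\lfloor f^\circ(e)\rfloor$ lies in $\{0,1\}$ and vanishes on every arc where $f^\circ$ is already integral; thus $h-\lfloor f^\circ\rfloor$ is a unit-capacity flow in $H$ routing precisely the demands. This certifies feasibility, so the maximum flow attains the total demand, and integrality of maximum flows in integer-capacitated networks guarantees that the computed $f_H$ is itself integral.

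Finally, for the running time I would note that $H$ is a subgraph of $G^\circ$ with $O(n)$ vertices and arcs, and augmenting it with the degree-one terminal gadgets keeps it a $2$-apex graph with multiple sources and sinks; hence the maximum flow can be computed in $O(n\log^3 n)$ time by the algorithm of Borradaile et al.~\cite{BKMNW17}. This dominates the linear-time bookkeeping of computing the floors, the demands, and the final sum, giving the claimed bound.
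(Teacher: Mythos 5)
Your proposal is correct and follows essentially the same route as the paper: round down to $\lfloor f^\circ\rfloor$, encode the resulting imbalances as demands, and route them by a single maximum-flow computation on the unit-capacity subgraph of fractional arcs (augmented with pendant demand arcs), invoking Lemma~\ref{L:fractional} to certify that the demands are simultaneously satisfiable and Borradaile et al.\ for the $O(n\log^3 n)$ bound on the resulting $2$-apex network. Your explicit witness $h-\lfloor f^\circ\rfloor$ for saturating the demand arcs is a slightly more detailed justification of a step the paper states more tersely, but the argument is the same.
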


{\bf Subroutines.}
Our algorithm uses several algorithms that compute maximum flows or circulations in graphs without vertex capacities.
First, we use an algorithm of Borradaile et al.~\cite{BKMNW17} for finding maximum flows in directed planar graphs with multiple sources and targets in $O(n \log^3 n)$ time.
Equivalently, the algorithm finds maximum flows in directed planar graphs with a single source and sink if the source and sink are the only apices.
Second, we use another algorithm by Borradaile et al.~\cite{BKMNW17} that finds maximum flows in $k$-apex graphs with multiple sources and sinks in $O(k^3 n \log^3 n)$ time.
Third, we use the classical Ford-Fulkerson augmenting-path algorithm that computes maximum flows in general graphs with integer capacities in $O(mU^*)$ time, where $m$ is the number of edges in the flow network and $U^*$ is the value of the maximum flow.
Finally, we implicitly use two algorithms that allow us to assume without loss of generality that certain flows are acyclic.
The first is by Kaplan and Nussbaum~\cite{KN11}:
\begin{lemma}\label{L:cancel-cycles}
Given a feasible flow $f^\circ$ in $G^\circ$, we can compute in $O(n)$ time another feasible flow of the same value as $f^\circ$ whose restriction to $G$ is feasible and acyclic by canceling flow-cycles.
\end{lemma}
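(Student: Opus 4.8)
The plan is to reduce the problem to canceling flow cycles while keeping a feasible flow in $G^\circ$ at all times. I would begin with the feasible flow $f^\circ$ and repeatedly locate a flow cycle in its restriction $f$ to $G$. Each such cycle corresponds to a closed walk in $G^\circ$, using the external arcs together with arcs of the cycles $C_v$ at the vertices it passes through, and I would cancel it by subtracting its bottleneck value all the way around this walk. This keeps the flow feasible in $G^\circ$ and preserves its value, since a flow cycle carries no net value. When no flow cycle remains in the restriction, I will have a feasible flow $g^\circ$ in $G^\circ$, of the same value as $f^\circ$, whose restriction $f'$ to $G$ is acyclic; it then remains to show that $f'$ is also feasible in $G$ and to bound the total running time by $O(n)$.

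The crux is the feasibility of $f'$, i.e.\ the inequality $(f')^{in}(v)\leq c(v)$ at every capacitated vertex $v$, which I would argue one cycle $C_v$ at a time. Let $\delta_i$ be the net flow that the external arc at $v_i$ pushes into $C_v$ (so $\delta_i=\pm f'(e_i)$ according to the orientation of $e_i$, the quantity $-demand(v_i)$ from the extension construction), and let $x_i$ be the net clockwise flow that $g^\circ$ routes along $C_v$ between $v_i$ and $v_{i+1}$. Conservation at each $v_i$ gives $x_i=x_0+\sum_{\ell\leq i}\delta_\ell$, while feasibility of $g^\circ$ bounds the flow on each directed arc of $C_v$ by $c(v)/2$, so that $|x_i|\leq c(v)/2$; hence the partial sums $\sum_{\ell\leq i}\delta_\ell$ have range at most $c(v)$. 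I would then show that, because $f'$ is acyclic and $G_{st}$ has a single (super)source and (super)sink, the arcs carrying flow into $v$ occur consecutively in the clockwise order around $v$, as do the arcs carrying flow out of $v$. For such a \emph{grouped} pattern the partial sums rise monotonically to $(f')^{in}(v)$ over the incoming block and then fall back to $0$, so their range is exactly $(f')^{in}(v)$; combining this with the bound above yields $(f')^{in}(v)\leq c(v)$.

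Establishing the grouping property is the main obstacle, and it is precisely where planarity enters: the support of an acyclic single-commodity flow in a planar network behaves like a planar bipolar orientation, in which the incoming and outgoing arcs of each vertex form two contiguous blocks. The intuition I would make rigorous is that an interleaved pattern of flow-carrying in- and out-arcs at $v$ would force two flow-carrying strands to meet at $v$ in crossing order and, since both originate at the supersource and terminate at the supersink, to close into a flow cycle through $v$, contradicting the acyclicity of $f'$. For the running time, canceling all flow cycles of the restriction can be carried out in $O(n)$ time by repeatedly stripping cycles from the flow graph $f_G$ using its planar structure; the accompanying updates on each $C_v$ mirror the construction in the proof of Lemma~\ref{L:extendable} and cost $O(\deg_G(v))$ per vertex, for $O(n)$ in total.
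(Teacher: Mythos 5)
Your approach is genuinely different from the paper's, which does not cancel cycles one at a time: it computes, via two shortest-path computations in the dual of $G^\circ$ (with a modified capacity function that caps each arc of $G$ at its current flow value), a circulation whose addition eliminates \emph{all} counterclockwise flow-cycles of the restriction simultaneously, and then repeats symmetrically for clockwise cycles; the $O(n)$ bound comes from the linear-time planar shortest-path algorithm used on the dual. This is where your first gap lies: iteratively finding a flow cycle in $f_G$, lifting it to $G^\circ$, and subtracting its bottleneck can require $\Theta(n)$ cancellations each touching $\Theta(n)$ arcs, so ``repeatedly stripping cycles using the planar structure'' does not give $O(n)$ without a substantially new idea --- indeed the paper invokes Sleator--Tarjan dynamic trees to get even $O(n\log n)$ for generic cycle canceling, and the whole point of the Khuller--Naor--Klein/Kaplan--Nussbaum potential construction is to avoid iterating over cycles.

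The second and more serious gap is your feasibility argument. The claim that acyclicity of $f'$ together with the existence of a single supersource and supersink forces the flow-carrying in-arcs (and out-arcs) at each vertex to be consecutive in the cyclic order is false when $k>2$: the supersource and supersink live in $G_{st}$, which is only a $2$-apex graph, so the planar bipolar-orientation intuition does not apply, and two strands meeting at $v$ in crossing order need not close into a flow cycle because they may come from different sources or go to different sinks of $G$. Vertices where the in- and out-arcs interleave are exactly the saddles of Section 3; they can have alternation number up to $2(k-1)$, and at such a vertex the restriction can carry up to $index(v)\, c(v)$ units of excess (Lemmas~\ref{L:few-saddles} and~\ref{L:small-excess}). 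If your grouping claim were provable, the entire paper would collapse to a single application of this lemma to a maximum flow in $G^\circ$. (The word ``feasible'' in the statement should be read as feasibility of the conservation and arc-capacity constraints; vertex-capacity feasibility of the restriction is guaranteed only in the single-source, single-sink setting of Kaplan and Nussbaum.) Your partial-sum computation on $C_v$ is correct as far as it goes --- it is essentially the proof of Lemma~\ref{L:small-excess} --- but it only bounds each \emph{contiguous block} of incoming flow by $c(v)$, not the total inflow.
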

We describe this algorithm in more detail in Appendix~\ref{A:cancel-cycles}.
Using this algorithm, we can assume that whenever we compute a flow in $G^\circ$, the restriction of that flow to $G$ is acyclic.

The second algorithm that we use implicitly is by Sleator and Tarjan~\cite{ST83}:
\begin{lemma}
Given a flow in a flow network with $O(n)$ vertices and arcs, we can compute another flow of the same value that is acyclic in $O(n \log n)$ time by canceling flow-cycles.
\end{lemma}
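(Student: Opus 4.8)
The plan is to eliminate flow-cycles one at a time, using the link-cut trees of Sleator and Tarjan to find and cancel each one in $O(\log n)$ amortized time. First I would record what a single cancellation does. Given a flow $f$ with a flow-cycle $C$, set $\delta = \min_{e \in C} f(e) > 0$ and subtract $\delta$ from $f(e)$ for every arc $e$ of $C$. Each vertex on $C$ has exactly one incoming and one outgoing arc of $C$, so this decreases its in-flow and out-flow by the same amount $\delta$ and leaves flow conservation intact; it changes no vertex's net out-flow, and since sources have no incoming arcs no source lies on $C$, so $\sum_{s \in S} f^{out}(s)$ is unchanged. Because $\delta$ equals the flow on some arc of $C$, at least one arc of $C$ now carries zero flow. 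Hence each cancellation strictly decreases the number of arcs carrying positive flow, and since there are $O(n)$ arcs there are at most $O(n)$ cancellations.

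To perform the cancellations efficiently I would maintain the positive-flow arcs in a link-cut forest in which the tree edge from a child $u$ to its parent $v$ represents a flow arc $(u,v)$ and has cost $f(u,v)$. The dynamic-tree primitives supply, each in $O(\log n)$ amortized time, the operations find-root, find-min (the least-cost edge on the path from a vertex to its root), add-cost (add a constant to every edge on that path), link, and cut. Starting from a root, I follow outgoing positive-flow arcs and link them, growing a directed path in the forest exactly as in a depth-first search. When the current vertex $c$ has a positive-flow arc $(c,w)$ whose head $w$ already lies on the current root path, this arc closes a flow-cycle consisting of the tree path from $c$ up to $w$ together with $(c,w)$. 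I then compute $\delta$ for this cycle with find-min, subtract it along the path with add-cost and from $f(c,w)$ directly, and cut every tree edge whose cost has fallen to zero. When the current vertex has no remaining unexplored positive-flow arc, I retreat by one edge, as in depth-first search. Reading the final costs back out of the forest recovers the acyclic flow.

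For the running time I would charge each dynamic-tree operation to an arc: an advance explores one arc, a retreat abandons a vertex that has no usable arc, and a cancellation permanently deletes at least one arc. Because cancellations only remove arcs and never create new positive-flow arcs, an already-explored or already-exhausted arc is never reprocessed, so the numbers of advances, retreats, and cancellations are each $O(m) = O(n)$; with $O(\log n)$ per operation this yields the claimed $O(n \log n)$ bound. The step demanding the most care is exactly this amortized accounting: I must argue that after a cancellation cuts arcs out of the middle of the current path, the search can resume without re-exploring arcs it has already processed, so that no arc is charged more than a constant number of times. The correctness of the cycle test (that $(c,w)$ closes a cycle precisely when $w$ lies on the current root path) and the bookkeeping that converts tree costs back into flow values are routine once this charging is established.
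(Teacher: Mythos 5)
Your proof is correct and is essentially the argument the paper relies on: the paper gives no proof of this lemma, citing Sleator and Tarjan~\cite{ST83}, and their dynamic-tree cycle-canceling procedure is exactly what you describe (DFS-style advance/retreat over positive-flow arcs, cycle detection via the root path, find-min/add-cost/cut to cancel, and the charging argument that each arc is advanced over, retreated past, and cut at most once). Your observations that cancellation preserves conservation and value (since sources have no incoming arcs, no flow-cycle passes through a source) and that each cancellation zeroes at least one arc are the right supporting facts.
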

Using this algorithm, we may assume that whenever we compute a flow in a graph, the computed flow is acyclic.


\section{Saddles and excess}
Suppose $f^\circ$ is a feasible flow in $G^\circ$ whose restriction $f$ to $G$ is acyclic. 
It is easy to see that $f$ satisfies conservation and arc capacity constraints.
In this section, we show that $f$ violates at most $k_1 + k_2 - 2$ vertex capacity constraints.

Let $f_G$ be the flow graph of $f$.
For any vertex $v$ in $f_G$, the {\em alternation number} of $v$, denoted by $\alpha(v)$, is the number of direction changes (i.e., from in to out or vice versa) of the arcs incident to $v$ as we examine them in clockwise order.
Thus $\alpha(u) = 0$ for all terminals $u$, and the alternation number of any vertex is even. 
A vertex $v$ is a {\em saddle in $f$} if $\alpha(v) \geq 4$. 
We let $index(v)$ denote the {\em index} of $v$ and define it by $index(v) = \alpha(v)/2 - 1$.

Guattery and Miller~\cite{GM92} showed the following:

\begin{lemma}\label{L:few-saddles}
	If $f_G$ is a plane directed acyclic graph with $k_1$ sources and $k_2$ sinks, then the sum of the indices of the saddles in $f_G$ is at most $k_1+k_2-2$. 
\end{lemma}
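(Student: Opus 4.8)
The plan is to recast the statement as a single global inequality and prove it with a corner-counting argument together with Euler's formula. Assign to each vertex the quantity $\alpha(v)/2-1$. Sources and sinks (which have $\alpha(v)=0$) contribute $-1$ each, ordinary vertices (with $\alpha(v)=2$) contribute $0$, and each saddle contributes exactly $index(v)=\alpha(v)/2-1$. Hence $\sum_v\left(\alpha(v)/2-1\right)=\left(\sum_{\text{saddles}}index(v)\right)-(k_1+k_2)$, so it suffices to prove $\sum_v\left(\alpha(v)/2-1\right)\leq -2$, i.e. $\sum_v\alpha(v)/2\leq V-2$, where $V$ is the number of vertices. Throughout I assume $f_G$ has no isolated vertices (deleting them changes no saddle index), so that every vertex of in-degree $0$ is a source, every vertex of out-degree $0$ is a sink, and every remaining vertex has $\alpha(v)\geq 2$.

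First I would set up the corner count. Fix the plane embedding and let $V$, $E$, $F$ count vertices, arcs, and faces (including the outer face). Around each vertex $v$ there are $\deg_G(v)$ corners, one for each pair of arcs that are consecutive in clockwise order; call a corner a \emph{source corner} if both of its bounding arcs point out of $v$. A short case analysis on the cyclic pattern of incoming/outgoing arcs around $v$ shows that the number of source corners at $v$ equals $d_{out}(v)-\alpha(v)/2$, where $d_{out}(v)$ is the out-degree of $v$ (the monochromatic cases $\alpha(v)=0$ must be checked separately). Summing over all vertices and using $\sum_v d_{out}(v)=E$, the total number of source corners in the embedding is exactly $E-\tfrac12\sum_v\alpha(v)$.

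The crux is the claim that every face has at least one source corner on its boundary. Here I would exploit acyclicity: $f_G$ admits a height function $h$ giving the vertices distinct values with $h(a)<h(b)$ for every arc $(a,b)$ (a perturbed topological order). For any face $F$, let $w$ be a minimum-height vertex occurring on the boundary walk of $F$. The two boundary arcs meeting at the corresponding corner of $F$ join $w$ to its neighbors in the walk, which lie on $F$ and therefore have strictly greater height; hence both arcs point out of $w$, and this corner is a source corner of $F$. Since distinct faces own disjoint corners, the number of source corners is at least $F$, so $E-\tfrac12\sum_v\alpha(v)\geq F$, that is, $\sum_v\alpha(v)/2\leq E-F$.

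It then remains only to apply Euler's formula. If $f_G$ is connected, $E-F=V-2$, giving $\sum_v\alpha(v)/2\leq V-2$ and hence $\sum_v(\alpha(v)/2-1)\leq -2$, as required; if $f_G$ has $c\geq 1$ components, $E-F=V-1-c$ and the same computation yields the stronger bound $\leq -1-c\leq -2$. I expect the main obstacle to be making the source-corner claim airtight: verifying the local count $d_{out}(v)-\alpha(v)/2$ in the degenerate monochromatic case, and confirming that the minimum-height boundary vertex really produces a source corner even for the outer face and for faces whose boundary walk repeats a vertex or runs along a bridge. Once that claim is secured, the remaining bookkeeping and the rearrangement back into a statement about saddle indices are routine.
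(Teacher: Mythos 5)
Your proof is correct and follows essentially the same route as the paper's: a corner/alternation-counting identity at the vertices combined with Euler's formula, with acyclicity guaranteeing that every face contributes. Your source-corner count $\sum_v\bigl(d_{out}(v)-\alpha(v)/2\bigr)=E-\tfrac12\sum_v\alpha(v)$ together with ``every face owns at least one source corner'' is a repackaging of the paper's identity $2E=\sum_v\alpha(v)+\sum_\phi\alpha(\phi)$ plus the observation that $\alpha(\phi)\geq 2$ for every face of an acyclic plane digraph; your height-function argument is a valid (slightly more elaborate) way of establishing that last fact.
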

In particular, a vertex in $f_G$ is a saddle if and only if it has positive index, so $f_G$ has at most $k - 2$ saddles.
A proof of Lemma~\ref{L:few-saddles} can be found in Appendix~\ref{A:saddles-and-excess}.

A vertex $v \in V(G)$ is {\em infeasible} under a flow $f$ if $f^{in}(v) > c(v)$ and {\em feasible} otherwise.
For any vertex $v \in V(G)$, let $\ex(f^\circ, v)$ and $\ex(f,v)$ denote the {\em excess} of the vertex $v$ under $f^\circ$ or $f$:
\[
	\ex(f^\circ, v) = \ex(f,v) = \max\{0, f^{in}(v) - c(v)\}
\]
The excess of a vertex is positive if and only if the vertex is infeasible. We also define $\ex(f^\circ) = \ex(f) = \max_{v \in V(G)} \ex(f,v)$.
We will sometimes say that $f$ has excess $\ex(f,v)$ on $v$.

\begin{lemma}\label{L:small-excess}
	Let $index(v)$ be defined for each vertex $v$ in $G$ using the flow graph $f_G$ of $f$.
    For each vertex $v$ in $f_G$, we have $\ex(f,v) \leq index(v) c(v)$.
\end{lemma}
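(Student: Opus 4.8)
The plan is to analyze the flow on a single cycle $C_v$ of the extended graph $G^\circ$ and show that feasibility of $f^\circ$ on the arcs of $C_v$ forces the total inflow $f^{in}(v)$ to be at most $(\alpha(v)/2)\,c(v)$; subtracting $c(v)$ then yields the bound. First I would dispose of the trivial cases: if $v$ is a terminal or if $f^{in}(v)=0$ then $\ex(f,v)=0$ and there is nothing to prove, so assume $v$ is a capacitated vertex with $f^{in}(v)>0$. By conservation $v$ then has both incoming and outgoing flow, so $\alpha(v)=2j$ for some $j\ge 1$ and $index(v)=j-1\ge 0$. Recall that in $G^\circ$ the vertex $v$ is replaced by the cycle $C_v$ on $v_1,\dots,v_d$ ($d=\deg_G(v)$) in clockwise order, each cycle arc having capacity $c(v)/2$, and that the clockwise order of the external arcs $e_i$ is preserved. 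For each $i$ let $g_i$ be the net flow that $f^\circ$ sends clockwise across the cycle arc joining $v_i$ and $v_{i+1}$ (indices mod $d$); since that arc and its reversal each have capacity $c(v)/2$ and $\min\{f(e),f(\rev(e))\}=0$, feasibility of $f^\circ$ gives $|g_i|\le c(v)/2$ for every $i$.

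Next I would convert conservation at the cycle vertices into a statement about partial sums. Using the demand values $demand(v_i)$ from the flow-extension construction ($-f(e_i)$ on incoming external arcs, $+f(e_i)$ on outgoing ones), conservation at $v_i$ reads $g_i-g_{i-1}=-demand(v_i)$, so $g_i=g_0-D_i$ where $D_i=\sum_{\ell=1}^{i} demand(v_\ell)$ is the cumulative demand (with $D_d=0$, since the demands sum to zero). The bound $|g_i|\le c(v)/2$ is therefore equivalent to $\max_i D_i-\min_i D_i\le c(v)$: the whole profile $(D_i)$ lives in a window of width $c(v)$. As $i$ increases, $D$ steps \emph{down} by $f(e_i)$ at each incoming external arc and \emph{up} at each outgoing one, so the cyclic profile consists of exactly $j$ maximal descending runs (the in-blocks counted by $\alpha(v)$) alternating with $j$ ascending runs. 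Each descending run falls from a local maximum $M_r$ to a local minimum $m_r$, and since $m_r\ge \min_i D_i$ and $M_r\le \max_i D_i$ we get $M_r-m_r\le c(v)$. Summing the descents over all $j$ runs, and noting that the total descent equals $\sum_{e_i\in in(v)} f(e_i)=f^{in}(v)$, gives $f^{in}(v)\le j\,c(v)=(index(v)+1)\,c(v)$.

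Finally, $\ex(f,v)=\max\{0,f^{in}(v)-c(v)\}\le (index(v)+1)c(v)-c(v)=index(v)\,c(v)$, as desired. The core inequality — each descending run drops by at most the width $c(v)$ of the feasible window, so $j$ runs contribute at most $j\,c(v)$ of inflow — is the short conceptual heart of the argument. The one genuinely delicate point, and where I expect most of the care to be needed, is the bookkeeping that ties the combinatorial quantity $\alpha(v)$, defined via the flow graph $f_G$ in which zero-flow arcs are absent, to the run structure of the partial-sum profile on $C_v$: a zero-flow external arc contributes a flat step that neither starts a new run nor breaks an existing one, so the number of descending runs is exactly the number of maximal incoming blocks, namely $\alpha(v)/2$. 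Verifying this correspondence, together with the cyclic indexing and the sign conventions in the definition of $demand$, is the main obstacle.
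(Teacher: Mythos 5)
Your proof is correct and follows essentially the same route as the paper: both arguments reduce to showing that each of the $index(v)+1$ maximal blocks of consecutive incoming arcs can carry at most $c(v)$ units of flow, because that flow must exit the corresponding segment of $C_v$ through the two cycle arcs of capacity $c(v)/2$ each. Your partial-sum/window reformulation is just a repackaging of the paper's direct conservation argument on the segment $\{v_i,\dots,v_j\}$, and the zero-flow-arc bookkeeping you flag only makes $\alpha(v)$ potentially larger, which weakens nothing.
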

\begin{proof}
	Let $f^\circ_G$ be the flow graph of $f^\circ$.
	We have $\alpha(v) = 2 \cdot index(v) + 2$.
    Thus, if we examine the arcs in $f_G$ incident to $v$ in clockwise order, there are $index(v)+1$ groups of consecutive incoming arcs.
    Consider such a group of consecutive incoming arcs $(u_i,v), \dots, (u_j, v)$ in $f_G$. We can view these as arcs $(u_i, v_i), \dots, (u_j, v_j)$ in $f^\circ_G$, where $v_i, \dots, v_j$ are consecutive vertices in $C_v$. 
    In $f^\circ_G$, the only two arcs in $out(\{v_i, \dots, v_j\})$ are $(v_i, v_{i-1})$ and $(v_j, v_{j+1})$, which have total capacity $c(v)$.
    Thus, for each vertex $v$ in $f_G$, each group of consecutive incoming arcs in $f_G$ has total weight at most $c(v)$.
    This shows that $f^{in}(v) \leq (index(v)+1)c(v)$ for any vertex $v$, from which the lemma follows.
\end{proof}
Combining Lemmas~\ref{L:few-saddles} and~\ref{L:small-excess}, we see that the sum of the excesses of all vertices under $f$ is $(k-2)U$.
Lemma~\ref{L:small-excess} implies that $f$ is only infeasible at saddles of $f_G$.

\section{Bounded integer capacity case}\label{S:bounded-capacity}
Suppose that all vertex and arc capacities are integers less than some constant $U$.
Let $f^\circ$ be an integral maximum flow in $G^\circ$, and let $f$ be its restriction to $G$. By Lemma~\ref{L:cancel-cycles} we may assume without loss of generality that $f$ is acyclic.
The flow $f$ may be infeasible at up to $k-2$ vertices $x_1, \dots, x_{k-2}$.
By Lemma~\ref{L:few-saddles} and~\ref{L:small-excess}, the sum of the excesses of the infeasible vertices is at most $(k-2)U$.
Computing $f$ takes $O(n \log^3 n)$ time using the algorithm of Borradaile et al.~\cite{BKMNW17}.
After finding $f$, the algorithm has two steps.

{\bf Step 1.} In this step, we remove $\ex(f,x)$ units of flow through each infeasible vertex $x$ to get a feasible flow $f_1$ in $G$.
The flow $f_1$ will have lower value than $f$.
To do this, let $f_G$ be the flow graph of $f$.
The graph $f_G$ is a directed acyclic graph.
To remove one unit of flow through an infeasible vertex $x$, we do the following:
\begin{itemize}
	\item Find a path $P_s$ in $f_G$ from $s$ to $x$, and a path $P_t$ in $f_G$ from $x$ to $t$.
Since $f_G$ is acyclic, $P_s$ and $P_t$ are internally disjoint.
	\item The arcs in $P_s \cup P_t$ form a path from $s$ to $t$. For every arc $e$ of $P_s \cup P_t$, decrease $f(e)$ by 1. The resulting $f$ is a flow in $G$ whose value has been decreased by one and whose excess through $x$ has been decreased by one. We also update $f_G$ accordingly.
\end{itemize}
By Lemmas~\ref{L:few-saddles} and~\ref{L:small-excess}, we only need to remove $(k-2)U$ units of flow from $f$ in order for $f$ to become a feasible flow in $G$.
Let $f_1$ be the resulting feasible flow.
Finding $P_s$ and $P_t$ and updating $f$ on all edges of $P_s \cup P_t$ takes $O(n)$ time, so step 1 runs in $O(knU)$ time.

{\bf Step 2.} 
Let $\overline{f_1}$ be the extension of $f_1$ to $\overline{G}$. 
In this step, we do the following:
\begin{itemize}
	\item Compute a maximum flow $\overline{f_2}$ in the residual graph of $\overline{G}$ with respect to $\overline{f_1}$ using the classical Ford-Fulkerson algorithm. 
	\item Return the restriction of $\overline{f_1} + \overline{f_2}$ to $G$.
\end{itemize}
Since $\overline{f_2}$ is a maximum flow in the residual graph of $\overline{G}$ with respect to $\overline{f_1}$, we see that $\overline{f_1} + \overline{f_2}$ is a maximum flow in $\overline{G}$.
It follows that the restriction of $\overline{f_1} + \overline{f_2}$ to $G$ is a maximum flow in $G$, as desired.

We have $val(\overline{G}) \leq val(G^\circ) = v(f) \leq v(f_1) + (k-2)U$.
Thus the value of $\overline{f_2}$ is at most $(k-2)U$, so computing $\overline{f_2}$ takes $O(knU)$ time. Hence step 2 takes $O(knU)$ time.
Thus, if $U$ is a constant, the entire algorithm runs in $O(n \log^3 n + kn)$ time.

\section{Integer capacities and $k > 2$}\label{S:integer}
Suppose all vertex and arc capacities are integers.
Let $\lambda^* = val(G)$.
The basic structure of the algorithm is as follows:
\begin{itemize}
	\item Guess $\lambda^*$ via binary search. 
    \begin{enumerate}
    	\item Suppose we guess the value of the maximum flow of $G$ to be $\lambda$. Find a maximum flow $f^\circ$ in $G^\circ$ of value $\lambda$. By Lemma~\ref{L:cancel-cycles}, we may assume that the restriction $f$ of $f^\circ$ to $G$ is acyclic.
    	\item While $\ex(f) > 2k \Delta$, improve $f$.
    	\item Fix $f$ using the algorithm from section~\ref{S:bounded-capacity}.
    \end{enumerate}
\end{itemize}
One can see that the algorithm has three main steps which we call {\em phases}.
In phase 2, improving $f$ means that we find a flow $f_1$ of the same value as $f$ such that 
\[
	\ex(f_1) \leq \frac{k-1}{k} ex(f) + \Delta.
\]
We then set $f$ to be the new flow $f_1$.
We will eventually show that a single improvement of $f$ can be done in $O(k^4 n \log^3 n)$ time.
In phase 3, fixing $f$ means that we remove $\ex(f,x)$ units of flow through each infeasible vertex $x$ to get flow $f'$, extend $f'$ to a flow $\overline{f'}$ in $\overline{G}$, and then use the Ford-Fulkerson algorithm to find a maximum flow $\overline{f''}$ in the residual graph of $\overline{G}$ with respect to $\overline{f'}$; we then set $f$ to be the restriction of $\overline{f'} + \overline{f''}$ to $G$.
Regarding the binary search, $\lambda \leq \lambda^*$ if the result of phase 3 is a feasible flow of value
$\lambda$, and $\lambda > \lambda^*$ if either phase 2 fails or if the flow that results from phase 3
has value less than $\lambda$.

Before we describe how phase 2 is implemented, let us analyze the running time of the algorithm.
If $U$ be the maximum capacity of a single vertex, then $\lambda^* \leq nU$, so the binary search for $\lambda^*$ only requires $O(\log (nU))$ guesses.
Computing $f^\circ$ in phase 1 takes $O(n \log^3 n)$ time using the algorithm of Borradile et al.~\cite{BKMNW17}.
By Lemma~\ref{L:small-excess}, at the beginning of phase 2, $\ex(f) \leq (k-2)U$.
The following lemma shows that phase 2 takes $O(k^5 n \log^3 n \log (kU))$ time:
\begin{lemma}
	After $O(k \log (kU))$ iterations of the while-loop in phase 2, $\ex(f,x) \leq 2k\Delta$ for every vertex $x \in V(G)$.
\end{lemma}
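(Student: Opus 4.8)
The plan is to treat the maximum excess $\ex(f)$ as a scalar quantity evolving under a linear recurrence and to bound how many iterations are needed to drive it below the loop's threshold. Write $E_i$ for the value of $\ex(f)$ after $i$ iterations of the while-loop, so $E_0$ is the excess at the start of phase~2. By Lemma~\ref{L:small-excess} together with Lemma~\ref{L:few-saddles} and the fact that every capacity is at most $U$, the starting excess satisfies $E_0 \le (k-2)U$. Each iteration replaces $f$ by an improved flow $f_1$, so the defining inequality of an improvement immediately yields the recurrence $E_{i+1} \le \frac{k-1}{k} E_i + \Delta$.

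The key observation is that this recurrence has fixed point $k\Delta$: if $E^\ast = \frac{k-1}{k} E^\ast + \Delta$, then $E^\ast = k\Delta$. I would therefore track the shifted quantity $D_i := E_i - k\Delta$, the amount by which the excess exceeds the fixed point. Substituting $E_i = D_i + k\Delta$ into the recurrence makes the constant term $\Delta$ cancel exactly, leaving the clean geometric contraction $D_{i+1} \le \frac{k-1}{k} D_i$, and hence $D_i \le \left(\frac{k-1}{k}\right)^i D_0 \le \left(\frac{k-1}{k}\right)^i (k-2)U$. Note that while the loop is running we have $E_i > 2k\Delta > k\Delta$, so $D_i > 0$ throughout and the contraction applies to a positive quantity.

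It remains to count how many iterations are needed to make $E_i \le 2k\Delta$, i.e. $D_i \le k\Delta$. Since $\Delta \ge 1$, it suffices to achieve $D_i \le k$, which holds as soon as $\left(\frac{k-1}{k}\right)^i (k-2)U \le k$, equivalently $i \ln\frac{k}{k-1} \ge \ln\frac{(k-2)U}{k}$. Here I would invoke the elementary inequality $\ln\frac{k}{k-1} = -\ln\!\left(1-\tfrac1k\right) \ge \frac1k$, which converts the requirement into the sufficient condition $i \ge k\ln\frac{(k-2)U}{k}$. Since $\ln\frac{(k-2)U}{k} = O(\log(kU))$, choosing $i = O(k\log(kU))$ suffices, and at that point every vertex $x$ satisfies $\ex(f,x) \le \ex(f) = E_i \le 2k\Delta$, as claimed.

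The one genuinely delicate point is the factor-of-$k$ slowdown. Because the contraction factor $\frac{k-1}{k}$ equals $1-\frac1k$, a single iteration removes only a $\frac1k$ fraction of the distance to the fixed point, so on the order of $k$ iterations are needed per constant-factor reduction; the inequality $-\ln(1-\frac1k)\ge\frac1k$ is exactly what makes this precise and produces the leading factor of $k$ in the bound. Everything else is routine bookkeeping, contingent only on the improvement step of phase~2 meeting its guarantee $\ex(f_1)\le\frac{k-1}{k}\ex(f)+\Delta$, which is assumed here and must be established separately when that step is described.
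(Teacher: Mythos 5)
Your proof is correct and follows essentially the same route as the paper: both reduce the claim to the geometric contraction of the recurrence $E_{i+1}\le\frac{k-1}{k}E_i+\Delta$ and use the estimate $\ln\frac{k}{k-1}\ge\frac1k$ to convert the number of halvings into a factor of $k$. Your fixed-point substitution $D_i=E_i-k\Delta$ is a cleaner way to absorb the additive $\Delta$ term, which the paper's proof only handles by asserting that the excess decreases ``roughly'' by the factor $1+1/(k-1)$ per iteration.
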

\begin{proof}
	After each iteration, $\ex(f)$ decreases roughly by a factor $1 + 1/(k-1) \geq 1 + 1/k$.
    Thus we only require $O(\log_{1 + 1/k} (kU)) = O(\frac{\ln (kU)}{\ln(1 + 1/k)})$ iterations.
    For $k \geq 1$ we have
    \begin{align*}
    	e^{1/2} < &(1 + 1/k)^k < e\\
        \implies 1/2 < &k \ln (1 + 1/k) < 1\\
        \implies \frac{1}{2k} < &\ln (1 + 1/k) < \frac{1}{k}.
    \end{align*}
    This means that $O(k \log kU)$ iterations suffice.
\end{proof}
In phase 3, the same reasoning as in Section~\ref{S:bounded-capacity} shows that computing $\overline{f'} + \overline{f''}$ takes $O(k^2n\Delta)$ time.
The total running time of the algorithm is thus
\begin{align*}
	O(\log (nU) [n \log^3 n + k^5 n \log^3 n \log (kU) + k^2n\Delta]) &= O(k^2 n (k^3 \log^3 n \log kC + \Delta) \log nC)\\
    &= O(k^2(k^3 + \Delta)n \text{ polylog}(nU)).
\end{align*}

The rest of this section describes one iteration of the while-loop in phase 2.
Specifically, given a feasible flow $f^\circ$ whose restriction $f$ to $G$ has at most $k-2$ infeasible vertices, we compute a flow $f_1^\circ$ in $G^\circ$ whose restriction $f_1$ to $G$ has at most $k-2$ infeasible vertices, each of which has excess at most $\frac{k-1}{k}ex(f) + \Delta$.
Let $X$ be the set of infeasible vertices under $f$, and for each $x \in X$, define $ex_x = \ex(f, x)$.
The procedure that finds $f^\circ_1$ has two stages, and in each stage we are trying to find a circulation in $G^\circ$ that can be added to $f^\circ$ to get $f_1^\circ$.
In stage 1, we find a circulation $g^\circ$ such that $f^\circ + g^\circ$ is feasible in $G^\circ$ and $(f^\circ + g^\circ)^{in}(C_x) \leq c(x)$ for every $x \in X$. However, the restriction of $f^\circ + g^\circ$ to $G$ may have large excesses on vertices not in $X$. To fix this, in stage 2 we use $g^\circ$ to compute a circulation $g_k^\circ$ such that $f^\circ + g_k^\circ$ is a feasible flow in $G^\circ$ and $\ex(f^\circ + g_k^\circ) \leq \frac{k-1}{k} \ex(f) + \Delta$. Intuitively, $g_k^\circ$ approximates $g^\circ/k$ while being an integer circulation. In stage 3, we use Lemma~\ref{L:cancel-cycles} so that without loss of generality we can assume the restriction of $f^\circ + g_k^\circ$ to $G$ is acyclic and has at most $k-2$ infeasible vertices; we then set $f_1^\circ = f^\circ + g_k^\circ$. If $\lambda > \lambda^*$, then $g^\circ$ may not exist and stage 1 may fail; if $\lambda \leq \lambda^*$, then $g^\circ$ exists and all three stages will work. 

\subsection{Stage 1}\label{S:integer-h}
To get $g^\circ$, we first convert $f^\circ$ to a feasible flow $f^\times$ of the same value in a flow network $G^\times$ such that the restrictions of $f^\circ$ and $f^\times$ to $G$ are equal.
Then, we find a circulation $g^\times$ in $G^\times$ such that the restriction of $f^\times + g^\times$ to $G$ has no excesses on the vertices of $X$.
Finally, we convert $f^\times + g^\times$ to a feasible flow $f^\circ + g^\circ$ in $G^\circ$, from which we get $g^\circ$.

We construct $G^\times$ as follows. Starting with $G^\circ$, we do the following for each vertex $x \in X$:
    \begin{itemize}
    	\item Replace $C_x$ with an arc $(x^{in}, x^{out})$ of capacity $c(x)$.
    	\item Every arc of a capacity $c$ going from a vertex $u$ to a vertex in the cycle $C_x$ is now an arc $(u, x^{in})$ of capacity $c$. 
    	\item Every arc of a capacity $c$ going from a vertex in the cycle $C_x$ to a vertex $x$ is now an arc $(x^{out}, u)$ of capacity $c$. 
    \end{itemize}
In a slight abuse of terminology, we say that a flow in $G^\circ$ is an extension of a flow in $G^\times$ if the two flows have the same restriction to $G$. Similarly, a flow in $G^\times$ is a restriction of a flow in $G^\circ$ if the two flows have the same restriction to $G$.
See Figure~\ref{F:infeasible-vertex}. 
\begin{figure}
\centering
\begin{tabular}{cr@{\qquad}cr@{\qquad}cr}
	\includegraphics[scale=0.45]{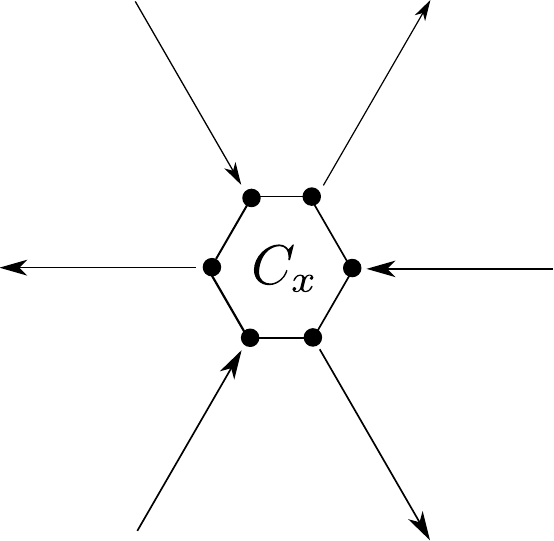} & \hspace{-0.25in} (a)
	&
	\includegraphics[scale=0.45]{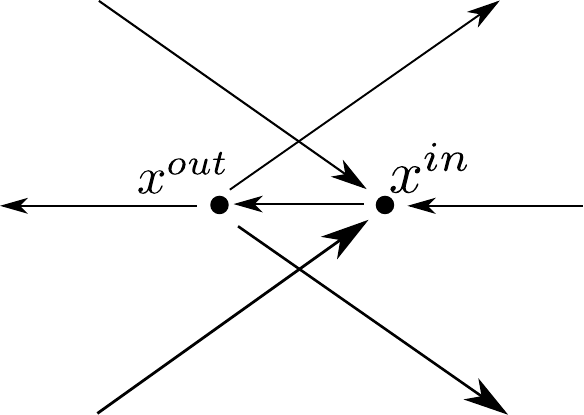} & \hspace{-0.25in}(b)
	&
	\includegraphics[scale=0.45]{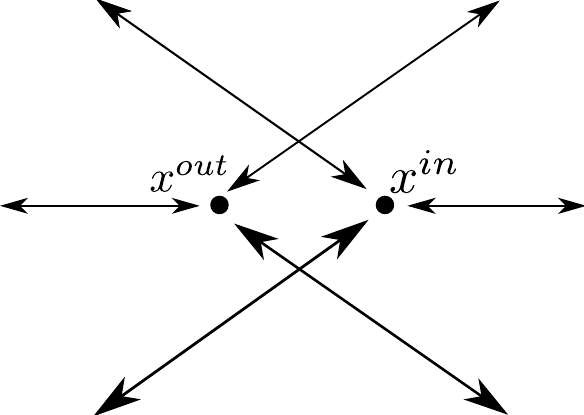} & \hspace{-0.25in} (c)
\end{tabular}
\caption{(a) $C_x$ where $x \in X$ in $G^\circ$ (b) $x^{in}$ and $x^{out}$ in $G^\times$ (c) source $x^{in}$ and sink $x^{out}$ in $H_i$ if $x = x_i$}
\label{F:infeasible-vertex}
\end{figure}
To define $f^\times$, let $f^\times(u,v) = f^\circ(u,v)$ for all arcs $(u,v) \in E(G^\times) \cap E(G^\circ)$, and let $f^\times(x^{in}, x^{out}) = (f^\circ)^{in}(C_x)$ for all $x \in X$.
It is easy to see that $f^\times$ is a flow from $s$ to $t$ whose only infeasible arcs are $(x^{in}, x^{out})$ for all $x \in X$. Furthermore, $(f^\times)^{out}(x^{out}) = (f^\times)^{in}(x^{in}) = f^{in}(x)$ for all $x \in X$. 
We have the following lemma:

\begin{lemma}\label{L:main-integer}
	For each $x \in X$, let $u_x \geq 0$.
    The following two statements are equivalent:
    \begin{enumerate}
    	\item There exists a feasible circulation $g^\circ$ in the residual graph of $G^\circ$ with respect to $f^\circ$ such that
        \[
        	(f^\circ + g^\circ)^{in}(C_{x}) = (f^\circ + g^\circ)^{out}(C_{x}) = f^{in}(x) - u_x
        \]
        for all $x \in X$.
        \item There exists a circulation $g^\times$ in $G^\times$ such that $f^\times + g^\times$ has a feasible extension in $G^\circ$, $f^\times + g^\times$ is feasible in $G^\times$ except possibly at arcs $(x^{in}, x^{out})$ for all $x \in X$, and 
        \[
        	(f^\times + g^\times)^{in}(x^{in}) = (f^\times + g^\times)(x^{in}, x^{out}) = (f^\times + g^\times)^{out}(x^{out}) = f^{in}(x) - u_x
        \]
        for all $x \in X$.
    \end{enumerate}
\end{lemma}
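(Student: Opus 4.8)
The plan is to set up a correspondence between the circulations of the two statements, mediated by their common restriction to $G$. The guiding principle is that the cycle $C_x$ in $G^\circ$ and the single arc $(x^{in},x^{out})$ in $G^\times$ are interchangeable metering devices for the flow passing through $x$, while every other part of the two graphs is literally identical. Three elementary observations drive the argument. First, the arcs entering $C_x$ in $G^\circ$ are exactly the incoming arcs of $x$ in $G$, which are the arcs redirected into $x^{in}$ in $G^\times$; likewise the arcs leaving $C_x$ are the outgoing arcs of $x$, redirected out of $x^{out}$. Hence for any two flows that agree on the arcs of $G$, the inflow to $C_x$ equals the inflow to $x^{in}$, and the outflow from $C_x$ equals the outflow from $x^{out}$. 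Second, the value of a flow equals the total flow leaving the original sources, which is determined by its restriction to $G$; so flows sharing a restriction to $G$ have equal value. Third, every structure disjoint from $X$ (the arcs of $G$ and the cycles $C_v$ for $v \notin X$) appears verbatim in both graphs, so arc-capacity feasibility there transfers for free.

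For the direction $1 \Rightarrow 2$, I assume $g^\circ$ as in statement 1, so that $f^\circ + g^\circ$ is a feasible flow in $G^\circ$ of value $v(f^\circ)$. I define $f^\times + g^\times$ by copying $f^\circ + g^\circ$ on every arc shared by $G^\circ$ and $G^\times$, and setting its value on each $(x^{in},x^{out})$ to $(f^\circ + g^\circ)^{in}(C_x)$. Conservation at $x^{in}$ and $x^{out}$ then follows from conservation of $f^\circ + g^\circ$ around $C_x$ combined with the first observation, and feasibility off the arcs $(x^{in},x^{out})$ follows from the third. The promised feasible extension in $G^\circ$ is simply $f^\circ + g^\circ$ itself, since it has the same restriction to $G$. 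By the second observation the value of $f^\times + g^\times$ is $v(f^\circ) = v(f^\times)$, so $g^\times$ is a circulation, and the metering identities $(f^\times + g^\times)^{in}(x^{in}) = (f^\times + g^\times)(x^{in},x^{out}) = (f^\times + g^\times)^{out}(x^{out}) = f^{in}(x) - u_x$ hold by construction together with the hypothesis of statement 1.

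For the direction $2 \Rightarrow 1$, I assume $g^\times$ as in statement 2 and let $h^\circ$ be the promised feasible extension of $f^\times + g^\times$ in $G^\circ$. I take $g^\circ$ to be the residual circulation with $f^\circ + g^\circ = h^\circ$. This exists because $h^\circ$ is a feasible flow in $G^\circ$ whose value equals $v(f^\times + g^\times) = v(f^\times) = v(f^\circ)$ (using the second observation and that $h^\circ$ and $f^\times + g^\times$ share a restriction to $G$); hence $g^\circ$ is a genuine feasible circulation in the residual graph of $G^\circ$ with respect to $f^\circ$. Finally, since $h^\circ$ and $f^\times + g^\times$ agree on the arcs of $G$, the first observation yields $(f^\circ + g^\circ)^{in}(C_x) = (f^\times + g^\times)^{in}(x^{in}) = f^{in}(x) - u_x$, and symmetrically $(f^\circ + g^\circ)^{out}(C_x) = f^{in}(x) - u_x$, which is exactly statement 1.

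The bookkeeping I expect to be most delicate is the passage between \emph{residual circulation} and \emph{feasible flow of equal value}: I must confirm that $f^\circ + g^\circ = h^\circ$ can be realized by a $g^\circ$ feasible in the residual graph (the standard correspondence, once the $\max$-convention for adding flows is accounted for), and that the value is genuinely insensitive to whether flow runs through $C_x$ or through $(x^{in},x^{out})$. The conceptual crux, however, is the feasible-extension clause: it is precisely what certifies that the flow through the expanded arc can be spread around $C_x$ without exceeding the $c(x)/2$ edge capacities. This clause is produced for free in one direction (by $f^\circ + g^\circ$) and consumed as a hypothesis in the other, so it is the hinge on which the equivalence turns.
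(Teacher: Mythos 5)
Your proof is correct and follows essentially the same route as the paper's: in the direction $(1)\Rightarrow(2)$ you restrict $f^\circ+g^\circ$ to $G^\times$ and meter the flow through $C_x$ on the arc $(x^{in},x^{out})$, with $f^\circ+g^\circ$ itself serving as the feasible extension, and in the direction $(2)\Rightarrow(1)$ you recover $g^\circ$ from the hypothesized feasible extension $h^\circ$ via $f^\circ+g^\circ=h^\circ$. Your explicit check that the two flows share a restriction to $G$ and hence have equal value (so $g^\times$, resp.\ $g^\circ$, really is a circulation) is a detail the paper leaves implicit, but the argument is the same.
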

\begin{proof}
    
	$\underline{(1) \Rightarrow (2): }$ Suppose (1) holds.
    Let $g^\times$ be the circulation in $G^\times$ defined by $g^\times(e) = g^\circ(e)$ for each $e \in E(G^\circ) \cap E(G^\times)$ and $g^\times(x^{in}, x^{out}) = (g^\times)^{in}(x^{in})$ for all $x \in X$.
    That is, $g^\times$ is the restriction of $g^\circ$ to $G^\times$.
    The circulation $g^\times$ satisfies conservation constraints at $x^{in}$ by definition, and $g^\times$ satisfies conservation constraints at $x^{out}$ because $g^\times(x^{in}, x^{out}) = (g^\circ)^{in}(C_x) = (g^\circ)^{out}(C_x) = (g^\times)^{out}(x^{out})$.
    Also, $g^\times$ satisfies conservation constraints at all other vertices because $g^\circ$ does.
    
    Since $(f^\circ)^{in}(C_x) = (f^\times)^{in}(x^{in})$ and $(g^\circ)^{in}(C_x) = (g^\times)^{in}(x^{in})$, we have $(f^\circ + g^\circ)^{in}(C_x) = (f^\times + g^\times)^{in}(x^{in})$.    
    A symmetric argument shows that $(f^\circ + g^\circ)^{out}(C_x) = (f^\times + g^\times)^{out}(x^{out})$.
    Flow conservation implies $(f^\times + g^\times)^{in}(x^{in}) = (f^\times + g^\times)(x^{in}, x^{out})$.
    The flow $f^\times + g^\times$ is feasible at all arcs in $E(G^\times) \cap E(G^\circ)$ because $f^\circ + g^\circ$ is.
    
    $\underline{(2) \Rightarrow (1): }$
    Suppose (2) holds.
    There is a feasible extension $h^\circ$ of $f^\times + g^\times$ to $G^\circ$.  
    Let $g^\circ$ be the circulation in $G^\circ$ such that $f^\circ + g^\circ = h^\circ$.
    Since $f^\circ + g^\circ$ is feasible in $G^\circ$, $g^\circ$ is feasible in the residual graph of $G^\circ$ with respect to $f^\circ$.
    It is easy to see that $g^\circ$ is an extension of $g^\times$.
    
    Since $(f^\circ)^{in}(C_x) = (f^\times)^{in}(x^{in})$ and $(g^\circ)^{in}(C_x) = (g^\times)^{in}(x^{in})$, we have $(f^\circ + g^\circ)^{in}(C_x) = (f^\times + g^\times)^{in}(x^{in})$.
    A symmetric argument shows that $(f^\circ + g^\circ)^{out}(C_x) = (f^\times + g^\times)^{out}(x^{out})$.
\end{proof}
If $\lambda \leq \lambda^*$, then there exists a feasible flow $f_\lambda$ in $G$ of value $\lambda$ that can be extended to feasible flows $f_{\lambda}^\times$ in $G^\times$ and $f_{\lambda}^\circ$ in $G^\circ$.
Thus statement (1) of Lemma~\ref{L:main-integer} holds for the circulation $f_{\lambda}^\circ - f^\circ$ in $G^\circ$ and for some choices of $u_x$ where $u_x \geq ex_x$ for all $x \in X$.
Lemma~\ref{L:main-integer} then implies that there exists a circulation $g^\times$ in $G^\times$ such that  $(f^\times + g^\times)^{in}(x^{in}) = (f^\times + g^\times)(x^{in}, x^{out}) = (f^\times + g^\times)^{out}(x^{out}) \leq c(x)$ for all $x \in X$, meaning that $f^\times + g^\times$ is feasible in $G^\times$.
If $\lambda > \lambda^*$, then $g^\times$ may not exist, and the computation of $g^\times$ will fail.
Let $g$ be the restriction of $g^\times$ to $G$.

We will compute the circulation $g^\times$ as the sum of $k-2$ circulations $\phi_1^\times, \dots, \phi_{k-2}^\times$. Let $x_1, \dots, x_{k-2}$ be an arbitrary ordering of the vertices in $X$.
For all $i \in [k-2]$, let $\gamma^\times_i = \phi^\times_1 + \dots + \phi_i^\times$, and let $\gamma_i$ be the restriction of $\gamma_i^\times$ to $G$. 
In particular, $\gamma^\times_0$ is the zero flow and $\gamma^\times_{k-2} = g^\circ$.
We will find the circulations $\phi_1^\times, \dots, \phi_{k-2}^\times$ one by one, and we will maintain the invariant that for all $i \in [k-2]$, $f^\times + \gamma_i^\times$ is a feasible flow in $G^\times$ whose restriction to $G$ has no excess on $x_1, \dots, x_{i}$, at most $\ex(f)$ excess on $x_{i+1}, \dots, x_{k-2}$, and at most $i \cdot \ex(f)$ excess on vertices in $V(G) \setminus X$.
Intuitively, $\phi_i^\times$ gets rid of the excess on $x_i$ without increasing any of the excesses on $x_1, \dots, x_{i-1}$ above 0 and without increasing any of the excesses on $x_{i+1}, \dots, x_{k-2}$ above $\ex(f)$.

So suppose $h^\times$ is a feasible in $G^\times$ whose restriction $h$ to $G$ has no excess on $x_1, \dots, x_{i-1}$, at most $\ex(f)$ excess on $x_i, \dots, x_{k-2}$, and at most $(i-1) \cdot \ex(f)$ excess on all vertices in $V(G) \setminus X$.
Our goal is to find a circulation $\phi_i^\times$ in $G^\times$ such that $h^\times + \phi_i^\times$ is a feasible flow in $G^\times$ whose restriction to $G$ has no excess on $x_1, \dots, x_i$, at most $\ex(f)$ excess on $x_{i+1}, \dots, x_{k-2}$, and at most $i \cdot \ex(f)$ excess on vertices in $V(G) \setminus X$.
For all $i$, finding $\phi_i^\times$ reduces to finding a flow $\phi_{i,H}$ in an $O(k)$-apex graph $H_i$, and we construct $H_i$ as follows: Starting with the residual graph of $G^\times$ with respect to $h^\times$, delete arcs $(x_i^{in}, x_i^{out})$ and $(x_i^{out}, x_i^{in})$. 
Let the source be $x_i^{in}$ and the target be $x_i^{out}$. 
For all $j > i$, if $h^\times(x_j^{in},x_j^{out}) > c(x_j)$, then the arc $(x_j^{in}, x_j^{out})$ has capacity $c(x_j) + \ex(f) - h^\times(x_j^{in}, x_j^{out})$ and the arc $(x_j^{out}, x_j^{in})$ has capacity $h^\times(x_j^{in}, x_j^{out})$.
See Figure~\ref{F:infeasible-vertex}.
We have the following lemma:

\begin{lemma}\label{L:main-integer2}
	Let $u \geq 0$.
    For all $i$, the following two statements are equivalent:
    \begin{enumerate}
        \item There exists a circulation $\phi_i^\times$ in $G^\times$ such that $h^\times + \phi_i^\times$ is feasible in $G^\times$ except possibly at arcs $(x_j^{in}, x_j^{out})$ for all $j > i$, where $(h^\times + \phi_i^\times)(x_j^{in}, x_j^{out}) \leq c(x_j) + \ex(f)$. Also,
        \[
        	(h^\times + \phi_i^\times)^{in}(x_i^{in}) = (h^\times + \phi_i^\times)(x_i^{in}, x_i^{out}) = (h^\times + \phi_i^\times)^{out}(x_i^{out}) = h^{in}(x_i) - u.
        \]
        \item There exists a feasible flow $\phi_{i,H}$ in $H_i$ of value $u$.
    \end{enumerate}
\end{lemma}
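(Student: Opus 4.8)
The plan is to mirror the proof of Lemma~\ref{L:main-integer}: the deleted arc $(x_i^{in}, x_i^{out})$ plays the role of a ``return arc'' that converts a flow of value $u$ from $x_i^{in}$ to $x_i^{out}$ in $H_i$ into a circulation in $G^\times$, and vice versa. In both directions the only arcs whose flow differs between $\phi_i^\times$ and $\phi_{i,H}$ are $(x_i^{in}, x_i^{out})$ and its reversal; everywhere else the two flows agree, so the bulk of the argument is a conservation check at $x_i^{in}$ and $x_i^{out}$ together with a matching of capacities.

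For the direction $(2) \Rightarrow (1)$, I would start from a feasible flow $\phi_{i,H}$ of value $u$ in $H_i$ and extend it to a flow $\phi_i^\times$ on $E(G^\times)$ by additionally decreasing the flow on $(x_i^{in}, x_i^{out})$ by $u$ (equivalently, pushing $u$ units along its reversal). The first task is to check that $\phi_i^\times$ is a circulation. At every vertex other than $x_i^{in}$ and $x_i^{out}$, conservation is inherited from $\phi_{i,H}$, since those vertices see no change on the return arc. At $x_i^{in}$, the source imbalance $+u$ of $\phi_{i,H}$ is exactly cancelled by the $-u$ change on the outgoing arc $(x_i^{in}, x_i^{out})$, and symmetrically the sink imbalance $-u$ at $x_i^{out}$ is cancelled; so conservation holds at both. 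The second task is feasibility: since $H_i$ is built from the residual graph of $G^\times$ with respect to $h^\times$, routing $\phi_{i,H}$ keeps $h^\times + \phi_i^\times$ within the capacities of $G^\times$ on every arc except the $(x_j^{in}, x_j^{out})$ with $j > i$; there the capacities $c(x_j) + \ex(f) - h^\times(x_j^{in}, x_j^{out})$ and $h^\times(x_j^{in}, x_j^{out})$ assigned in $H_i$ are precisely the residual capacities for the relaxed bound $0 \le (h^\times + \phi_i^\times)(x_j^{in}, x_j^{out}) \le c(x_j) + \ex(f)$, which is exactly the feasibility exception allowed in statement (1). Finally, since $(x_i^{in}, x_i^{out})$ is the unique arc out of $x_i^{in}$ in $G^\times$, conservation forces $(h^\times + \phi_i^\times)^{in}(x_i^{in}) = (h^\times + \phi_i^\times)(x_i^{in}, x_i^{out})$, and the analogous equality holds at $x_i^{out}$; since $h^\times(x_i^{in}, x_i^{out}) = h^{in}(x_i)$ and we decreased this arc by $u$, all three quantities equal $h^{in}(x_i) - u$, giving the displayed equation.

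For the reverse direction $(1) \Rightarrow (2)$, I would take the circulation $\phi_i^\times$ and let $\phi_{i,H}$ be its restriction to $H_i$, that is, discard its flow on $(x_i^{in}, x_i^{out})$ and on the reversal. Deleting the return arc destroys conservation at exactly $x_i^{in}$ and $x_i^{out}$; because the displayed equation in (1) says this arc carries $h^{in}(x_i) - u$ units while $h^\times$ alone carries $h^{in}(x_i)$, the circulation has decreased it by $u$, so $x_i^{in}$ becomes a source and $x_i^{out}$ a sink, each of imbalance $u$, making $\phi_{i,H}$ a flow of value $u$. Feasibility in $H_i$ is inherited from feasibility of $h^\times + \phi_i^\times$ in $G^\times$, again using that the relaxed bounds for $j > i$ match the $H_i$ capacities exactly.

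I expect the main obstacle to be bookkeeping rather than any conceptual difficulty: one must verify that the three families of capacities in $H_i$ --- the residual capacities inherited from $G^\times$, the relaxed forward capacity $c(x_j) + \ex(f) - h^\times(x_j^{in},x_j^{out})$, and the backward capacity $h^\times(x_j^{in},x_j^{out})$ on the $j>i$ arcs --- line up precisely with the feasibility conditions (including the stated exceptions) of statement (1), and that the conservation check at $x_i^{in}$ and $x_i^{out}$ correctly converts the value-$u$ flow into a circulation and back. Care is also needed because these flows live in the residual graph, so an increase of $\phi_{i,H}$ along a residual reversal arc corresponds to a decrease of the underlying flow; keeping the signs straight when verifying the displayed equalities is where an error would most easily creep in.
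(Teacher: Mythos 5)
Your proposal is correct and follows essentially the same route as the paper's proof: both directions treat $(x_i^{in}, x_i^{out})$ (via its reversal) as a return arc carrying $u$ units, check conservation at $x_i^{in}$ and $x_i^{out}$, and invoke the residual-capacity construction of $H_i$ (including the relaxed capacities on the $j>i$ arcs) for feasibility. The paper's write-up is terser but contains no idea beyond what you describe.
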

\begin{proof}
    $\underline{(1) \Rightarrow (2): }$ 
    Suppose (1) holds. 
    Let $\phi_{i,H}$ be the restriction of $\phi_i^\times$ to $H_i$.
    The flow $\phi_{i,H}$ is feasible in $H_i$ by the definition of $H_i$.
    
    Since $(h^\times)(x_i^{in}, x_i^{out}) = (h)^{in}(x_i)$ and $(h^\times + \phi_i^\times)(x_i^{in}, x_i^{out}) = h^{in}(x_i) - u$, we have $\phi_i^\times(x_i^{out}, x_i^{in}) = u$.
    Since $x_i^{in}$ is not a source in $G^\times$, flow conservation at $x_i^{in}$ implies $(\phi_i^\times)^{out}(x^{in}) = u$.
    This means that $\phi_{i,H}^{out}(x_i^{in}) = u$.
    A symmetric argument implies $\phi_{i,H}^{in}(x_i^{out}) = u$.
    
    $\underline{(2) \Rightarrow (1): }$ 
    Suppose (2) holds.
    Define an extension $\phi_i^\times$ of $\phi_{i,H}$ to a circulation in $G^\times$ by setting $\phi_i^\times(x_i^{out}, x_i^{in}) = u$.
    It is easy to see that $g^\times$ satisfies conservation constraints. The arc capacities in $H_i$ ensure $h^\times + \phi_i^\times$ is feasible in $G^\times$ except possibly at arcs $(x_j^{in}, x_j^{out})$ for all $j > i$, where $(h^\times + \phi_i^\times)(x_j^{in}, x_j^{out}) \leq c(x_j) + \ex(f)$.
    
    Since $x_i^{in}$ is not a terminal in $G^\times$, $h^\times(x_i^{in}, x_i^{out}) = h^{in}(x_i)$, and $\phi_i^\times(x_i^{out}, x_i^{in}) = u$, we have $(h^\times + \phi_i^\times)(x^{in}, x^{out}) = h^{in}(x_i) - u$.
    On the other hand, flow conservation at $x_i^{in}$ and $x_i^{out}$ implies $(h^\times + \phi_i^\times)^{in}(x_i^{in}) = (h^\times + \phi_i^\times)(x_i^{in}, x_i^{out}) = (h^\times + \phi_i^\times)^{in}(x_i^{out})$.
\end{proof}
By the existence of $g^\times$, we know that there exists a circulation $\phi_i^\times$ such that $h^\times + \phi_i^\times$ is feasible in $G^\times$, so statement (1) in Lemma~\ref{L:main-integer2} holds for some $u \geq \ex(h, x_i)$. 
By Lemma~\ref{L:main-integer2}, there must exist a flow $\phi_{i,H}$ of value $\ex(h, x_i)$ in $H_i$.
We compute $\phi_{i,H}$ as follows: Starting with $H_i$, we add a vertex $x^s$ that will be the source instead of $x_i^{in}$, and we add an arc $(x^s, x_i^{in})$ with capacity $\ex(h, x_i)$; similarly, we add a vertex $x^t$ that will be the target instead of $x_i^{in}$, and an arc $(x_i^{out}, x^{t})$ with capacity $\ex(h, x_i)$.
The resulting graph has an acyclic maximum flow that saturates every arc incident to a terminal and so has value $\ex(h, x_i)$, and the restriction of this flow to $H_i$ is $\phi_{i,H}$.
By induction we may assume that $\ex(h^\times, x_i) \leq \ex(f)$, so $v(\phi_{i,H}) \leq \ex(f)$.

By Lemma~\ref{L:main-integer2}, the flow $\phi_{i,H}$ corresponds to a circulation $\phi_i^\times$ in $G^\times$ such that $h^\times + \phi_i^\times$ has no excess on $x_1, \dots, x_i$ and is feasible in $G^\times$ except possibly at arcs $(x_j^{in}, x_j^{out})$ for all $j > i$, where $(h^\times + \phi_i^\times)(x_j^{in}, x_j^{out}) \leq c(x_j) + \ex(f)$.
The restriction of $h^\times + \phi_i^\times$ to $G$ is thus feasible at $x_1, \dots, x_i$ and has at most $\ex(f)$ excess at $x_{i+1}, \dots, x_{k-2}$.
If $\lambda > \lambda^*$, then $\phi_{i,H}$ may not exist, and when we try to compute it, it will have value strictly less than $\ex(h, x_i)$. 
If this happens, then the restriction of $h^\times + \gamma_i^\times$ to $G$ will have positive excess on $x_i$, breaking the desired invariant.
\begin{lemma}\label{L:stage-1-excess}
	Suppose $i \in [k-2]$. If $u \in V(G) \setminus X$, then $\ex(f + \gamma_i, u) \leq i \cdot ex(f)$.
\end{lemma}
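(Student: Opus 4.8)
The plan is to prove the statement by induction on the stage index $i$, tracking how each circulation $\phi_i$ perturbs the flow entering a fixed non-infeasible vertex $u$. Writing the restrictions to $G$ as $f + \gamma_i = (f + \gamma_{i-1}) + \phi_i$, I would reduce the whole lemma to the single inequality
\[
    (f + \gamma_i)^{in}(u) \le (f + \gamma_{i-1})^{in}(u) + \ex(f),
\]
valid for every $u \in V(G) \setminus X$. Granting this, the excess bound is immediate from the elementary estimate $\max\{0, a + b\} \le \max\{0, a\} + b$ for $b \ge 0$: taking $a = (f+\gamma_{i-1})^{in}(u) - c(u)$ and $b = \ex(f)$ yields $\ex(f + \gamma_i, u) \le \ex(f + \gamma_{i-1}, u) + \ex(f)$, and the inductive hypothesis $\ex(f + \gamma_{i-1}, u) \le (i-1)\ex(f)$ then closes the induction. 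The base case $i = 0$ is trivial: $\gamma_0$ is the zero circulation, and since $X$ is exactly the set of vertices infeasible under $f$, every $u \in V(G) \setminus X$ has $\ex(f, u) = 0$.

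First I would record the two facts on which the inductive step rests. From the construction via Lemma~\ref{L:main-integer2} and the discussion following it, $\phi_i$ is the restriction to $G$ of the circulation $\phi_i^\times$, which is the flow $\phi_{i,H}$ in $H_i$ closed off through the arc $(x_i^{out}, x_i^{in})$; and the text already establishes $v(\phi_{i,H}) = \ex(h, x_i) \le \ex(f)$, where $h$ is the restriction of $h^\times = f^\times + \gamma_{i-1}^\times$. Thus the entire perturbation carries total value $V := v(\phi_{i,H}) \le \ex(f)$, and what remains is to show that a perturbation of value $V$ raises the inflow at a single vertex $u$ by at most $V$.

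The heart of the argument is to carry out this accounting in $G$ rather than in $G^\times$. I would take $\phi_{i,H}$ acyclic (as we may, by the blanket cancellation assumptions) and decompose it into at most $V$ worth of flow along simple residual paths from $x_i^{in}$ to $x_i^{out}$ in $H_i$, then push these augmenting paths onto $f + \gamma_{i-1}$ one unit at a time. The subtlety — and the step I expect to be the main obstacle — is that in $G^\times$ a single such path can touch several distinct vertices of the cycle $C_u$, so a naive count could cross into $u$ more than once and inflate the bound past $V$. The resolution is that $u$ is a \emph{single} vertex of $G$: any repeated crossing of $u$ by an augmenting path traces out a flow-cycle through $u$, and by Lemma~\ref{L:cancel-cycles} together with the Sleator--Tarjan cancellation we may assume the restriction $f + \gamma_i$ to $G$ is acyclic, so cancelling these cycles only decreases the inflow at $u$. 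Consequently each unit of the value-$V$ perturbation contributes at most one unit of net increase to $(f + \gamma_i)^{in}(u)$, which gives the displayed inequality and completes the induction.
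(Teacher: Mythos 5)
Your proof has the same skeleton as the paper's: induction on $i$, with the inductive step resting on the inequality $\ex(f+\gamma_i,u)\le \ex(f+\gamma_{i-1},u)+v(\phi_{i,H})$ together with the bound $v(\phi_{i,H})\le\ex(f)$ established just before the lemma. In fact the paper's entire proof consists of exactly those two ingredients; it asserts the displayed inequality without any justification.

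Where you diverge is in trying to actually prove that inequality, and that is where your argument develops a gap. First, the cycle cancellation of Lemma~\ref{L:cancel-cycles} is invoked by the algorithm only in stage 3, after all of $g^\times=\gamma_{k-2}^\times$ has been assembled; the intermediate flows $f+\gamma_i$ are never passed through it, so you are not entitled to assume they are acyclic, and the lemma is a statement about those intermediate flows as defined (replacing them by cycle-cancelled versions would bound a different object). Second, even granting acyclicity, a repeated crossing of $C_u$ by an augmenting path of $\phi_{i,H}$ need not trace out a flow-cycle of $f+\gamma_i$: the path lives in the residual graph of $G^\times$ with respect to $h^\times$, so the segment joining two visits to $C_u$ may traverse reverse residual arcs, i.e.\ arcs on which existing flow is being cancelled, and the projection of that segment to $G$ is then not a directed cycle carried by $f+\gamma_i$, so there is nothing to cancel. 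The worry you correctly identify --- that one unit of $\phi_{i,H}$ could enter $C_u$ more than once and raise $(f+\gamma_i)^{in}(u)$ by more than one --- is therefore not actually dispatched by your argument. To be fair, the paper does not dispatch it either; it simply never acknowledges it.
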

\begin{proof}
The proof is by induction on $i$.
The lemma is true if $i = 0$ by the definition of $f^\circ$.
Since $\ex(f + \gamma_{i-1}, u) \leq (i-1) \cdot ex(f)$, we have
	\begin{align*}
    	\ex(f + \gamma_{i}, u) &\leq \ex(f + \gamma_{i-1}, u) + v(\phi_{i,H})\\
		&\leq (i-1) \cdot \ex(f) + \ex(f)\\
        &\leq i \cdot \ex(f)
    \end{align*}
\end{proof}
We have thus shown that for all $i \in [k-2]$, the invariant is maintained: $f^\times + \gamma_i^\times$ is a feasible flow in $G^\times$ whose restriction to $G$ has no excess on $x_1, \dots, x_{i}$, at most $\ex(f)$ excess on $x_{i+1}, \dots, x_{k-2}$, and at most $i \cdot \ex(f)$ excess on vertices in $V(G) \setminus X$.
When $i = k-2$, we get that $f^\times + \gamma_i^\times = f^\times + g^\times$ is a feasible flow in $G^\times$ where $\ex(f + g, u) \leq (k-2) \ex(f)$ for all $u \in V(G) \setminus X$. 
The flow $f^\times + g^\times$ has no excess on the vertices of $X$, so the proof of Lemma~\ref{L:extendable} implies that $f^\times + g^\times$ has an extension in $G^\circ$.
Lemma~\ref{L:main-integer} then implies that $g^\times$ corresponds to a circulation $g^\circ$ in $G^\circ$ such that $\ex(f^\circ + g^\circ, x) = 0$ for all $x \in X$ and $\ex(f^\circ + g^\circ, u) \leq (k-2) \ex(f)$ for all $u \in V(G) \setminus X$; we can compute $g^\circ$ in $O(n \log^3 n)$ time.
We have proved the following lemma:

\begin{lemma}
	For any vertex $x \in X$, $\ex(f^\circ + g^\circ, x) = 0$. For any vertex $v \in V(G) \setminus X$, $\ex(f^\circ + g^\circ, v) \leq (k-2)\ex(f)$.
\end{lemma}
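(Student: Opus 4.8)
The plan is to obtain the lemma as the terminal case of the invariant maintained for the circulations $\phi_1^\times, \dots, \phi_{k-2}^\times$, and then to transport the resulting flow in $G^\times$ back to $G^\circ$ through Lemma~\ref{L:main-integer}. First I would specialize the invariant to $i = k-2$. Since $X = \{x_1, \dots, x_{k-2}\}$, the clause "at most $\ex(f)$ excess on $x_{i+1}, \dots, x_{k-2}$" becomes vacuous, and what remains is precisely the two assertions we want for $g^\times = \gamma_{k-2}^\times$: the flow $f^\times + g^\times$ is feasible in $G^\times$, its restriction to $G$ has no excess on any vertex of $X$, and its restriction to $G$ has at most $(k-2)\ex(f)$ excess on every vertex of $V(G) \setminus X$. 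This requires no new argument, since the invariant is already secured by the inductive construction of the $\phi_i^\times$ together with Lemma~\ref{L:stage-1-excess}.

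Next I would construct $g^\circ$. The key observation is that "no excess on $x$" is exactly the statement $(f^\times + g^\times)^{in}(x^{in}) = (f^\times + g^\times)(x^{in}, x^{out}) \le c(x)$, so every arc $(x^{in}, x^{out})$ is feasible and $f^\times + g^\times$ is feasible throughout $G^\times$. Because the net flow routed into each cycle $C_x$ equals $(f^\times + g^\times)^{in}(x^{in}) \le c(x)$, the clockwise/counterclockwise splitting used in the proof of Lemma~\ref{L:extendable} can be applied to each $C_x$ (the cycles $C_v$ with $v \notin X$ are identical in $G^\times$ and $G^\circ$ and already carry a feasible flow), yielding a feasible extension of $f^\times + g^\times$ to $G^\circ$. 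With that extension available, the implication $(2) \Rightarrow (1)$ of Lemma~\ref{L:main-integer} produces a circulation $g^\circ$, feasible in the residual graph of $G^\circ$ with respect to $f^\circ$, satisfying $(f^\circ + g^\circ)^{in}(C_x) = (f^\times + g^\times)^{in}(x^{in})$ for every $x \in X$.

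Finally I would read off the two bounds. For $x \in X$, combining the previous identity with feasibility of $(x^{in},x^{out})$ gives $(f^\circ + g^\circ)^{in}(C_x) \le c(x)$, hence $\ex(f^\circ + g^\circ, x) = 0$. For $v \in V(G) \setminus X$, I would use that $f^\circ + g^\circ$ is a feasible extension of $f^\times + g^\times$ and therefore has the same restriction to $G$; since the excess of a vertex depends only on the flow on the arcs of $G$ incident to it, $\ex(f^\circ + g^\circ, v) = \ex(f^\times + g^\times, v) \le (k-2)\ex(f)$ by the invariant.

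I do not anticipate a serious difficulty, as the lemma is largely a bookkeeping summary of the inductive construction already carried out. The one step deserving care is the extension back to $G^\circ$: I must verify that the hypothesis needed by Lemma~\ref{L:extendable}, namely feasible cycle demands $(f^\times + g^\times)^{in}(x^{in}) \le c(x)$, is exactly what "no excess on $X$" provides, so that the half-clockwise/half-counterclockwise routing keeps every arc of each $C_x$ within its capacity $c(x)/2$.
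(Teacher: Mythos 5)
Your proposal is correct and follows essentially the same route as the paper: specialize the invariant to $i=k-2$, invoke the construction in the proof of Lemma~\ref{L:extendable} to extend $f^\times + g^\times$ feasibly to $G^\circ$, and then apply the $(2)\Rightarrow(1)$ direction of Lemma~\ref{L:main-integer} to obtain $g^\circ$ and read off the two excess bounds. The extra care you take in checking that ``no excess on $X$'' is exactly the hypothesis needed for the cycle extension is a welcome elaboration of a step the paper states more tersely.
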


Computing $\phi_i^\times$ requires us to compute a maximum flow in a graph with $O(k)$ apices (these are $s$, $t$, and $x^{in}$ and $x^{out}$ for all $x \in X$), which takes $O(k^3 n \log^3 n)$ time using the algorithm of Borradaile et al.~\cite{BKMNW17}. Since we need to compute $k-2$ such flows, computing $g^\circ$ takes $O(k^4 n \log^3 n)$ time.

\subsection{Stage 2}
Having found an integer circulation $g^\circ$ in $G^\circ$, we construct the fractional circulation $g^\circ/k$ in $G^\circ$.
Using the algorithm of Lemma~\ref{L:fractional-alg}, we can let $g_k^\circ$ be an integer circulation in $G^\circ$ such that $|(g^\circ/k)(e) - g_k^\circ(e)| < 1$ for every arc $e$ in $G^\circ$.

\begin{lemma}\label{L:reduce-excess}
	For any vertex $v \in V(G)$, $\ex(f^\circ + g_k^\circ, v) \leq \frac{k-1}{k}\ex(f) + \Delta$.
\end{lemma}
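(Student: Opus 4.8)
The plan is to establish the bound first for the fractional circulation $g^\circ/k$ and then absorb the rounding to $g_k^\circ$ into the additive $\Delta$ term. For a flow $h$ and a finitely-capacitated vertex $v$, I would list the external edges of $C_v$ as $\epsilon_1, \dots, \epsilon_d$ with $d = \deg_G(v) \le \Delta$, orient each $\epsilon_j$ inward as $e_j \in in(C_v)$, and write $\tilde h(e_j) = h(e_j) - h(\rev(e_j))$ for the net flow across $\epsilon_j$. Under the standing assumption $\min\{h(e), h(\rev(e))\} = 0$ we have $h(e_j) = \max\{0, \tilde h(e_j)\}$, so $h^{in}(C_v) = \sum_{j=1}^d \max\{0, \tilde h(e_j)\}$, and the definition of flow addition gives the additivity $\widetilde{(f+g)}(e) = \tilde f(e) + \tilde g(e)$ on every arc. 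Here $g^\circ$, and hence $g^\circ/k$, may be assumed to satisfy the min-zero normalization.

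Next I would bound $\ex(f^\circ + g^\circ/k, v)$. Writing $a_j = \widetilde{f^\circ}(e_j)$ and $b_j = \widetilde{g^\circ}(e_j)$, the identity $a_j + b_j/k = \tfrac{k-1}{k}a_j + \tfrac1k(a_j+b_j)$ and convexity of $t \mapsto \max\{0,t\}$ give $\max\{0, a_j + b_j/k\} \le \tfrac{k-1}{k}\max\{0,a_j\} + \tfrac1k\max\{0,a_j+b_j\}$. Summing over $j$, subtracting $c(v) = \tfrac{k-1}{k}c(v) + \tfrac1k c(v)$, and applying convexity once more yields
\[
\ex(f^\circ + g^\circ/k, v) \;\le\; \tfrac{k-1}{k}\,\ex(f^\circ, v) + \tfrac1k\,\ex(f^\circ + g^\circ, v).
\]
I then split into cases using Stage 1. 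If $v \in X$, then $\ex(f^\circ + g^\circ, v) = 0$ and $\ex(f^\circ, v) = \ex(f,v) \le \ex(f)$, so the right side is at most $\tfrac{k-1}{k}\ex(f)$. If $v \notin X$, then $v$ is feasible under $f$, so $\ex(f^\circ, v) = 0$, while $\ex(f^\circ + g^\circ, v) \le (k-2)\ex(f)$, giving at most $\tfrac{k-2}{k}\ex(f) \le \tfrac{k-1}{k}\ex(f)$. In both cases $\ex(f^\circ + g^\circ/k, v) \le \tfrac{k-1}{k}\ex(f)$.

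Finally I would control the rounding. Lemma~\ref{L:fractional-alg} gives $|(g^\circ/k)(e) - g_k^\circ(e)| < 1$ on every arc. The key observation is that whenever $g^\circ/k$ is zero on one orientation of an edge, this bound together with integrality forces $g_k^\circ$ to be zero on that orientation too; since $g^\circ/k$ is min-zero, on each external edge $\epsilon_j$ the net values then satisfy $|\widetilde{(g^\circ/k)}(e_j) - \widetilde{g_k^\circ}(e_j)| < 1$. Writing $h = f^\circ + g^\circ/k$ and $h' = f^\circ + g_k^\circ$, additivity of net flow gives $|\tilde h(e_j) - \widetilde{h'}(e_j)| < 1$ for all $j$, so the $1$-Lipschitz property of $\max\{0,\cdot\}$ and $d \le \Delta$ yield $|(h')^{in}(C_v) - h^{in}(C_v)| < \Delta$. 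Hence $\ex(h', v) = \max\{0, (h')^{in}(C_v) - c(v)\} \le \max\{0, h^{in}(C_v) - c(v)\} + \Delta = \ex(h, v) + \Delta \le \tfrac{k-1}{k}\ex(f) + \Delta$.

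I expect the rounding step to be the main obstacle: the naive per-arc Lipschitz estimate loses a factor of two (yielding $2\Delta$), since both orientations of an edge could a priori move by nearly one. The whole point is that the min-zero normalization of $g^\circ/k$ combined with integrality pins the inactive orientation of $g_k^\circ$ to exactly zero, so only one orientation per edge can shift, recovering the tight $\Delta$. By contrast, the convexity argument for the fractional flow is routine once the net-flow bookkeeping and the two Stage 1 excess bounds are in place.
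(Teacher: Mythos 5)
Your proposal is correct and follows essentially the same route as the paper: bound the excess of $f^\circ + g^\circ/k$ by interpolating between $\ex(f^\circ,v)$ and $\ex(f^\circ+g^\circ,v)$ using the Stage~1 guarantees (the two cases $v\in X$ and $v\notin X$ give $\frac{k-1}{k}\ex(f)$ and $\frac{k-2}{k}\ex(f)$ respectively), then charge the rounding to $g_k^\circ$ at most $1$ per external arc of $C_v$, i.e.\ at most $\deg_G(v)\le\Delta$ in total. Your treatment of the rounding step is in fact more careful than the paper's terse version---the observation that the min-zero normalization plus integrality pins the inactive orientation of each edge to zero, avoiding a spurious factor of $2$ on the $\Delta$ term, is a detail the paper glosses over.
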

\begin{proof}
For any vertex $x \in X$, we have $\ex(f^\circ + g^\circ, x) = 0$, so $\ex(f^\circ + g^\circ/k, x) \leq \frac{k-1}{k}ex_x$.
By Lemma~\ref{L:fractional-alg}, $g_k^\circ(e)$ and $(g^\circ/k)(e)$ differ by at most 1 on every arc $e$.
	There are $\deg_G(x)$ arcs in $G^\circ$ that are incident to at least one vertex in $C_x$, so 
\[
	\ex(f^\circ + g_k^\circ, x) \leq \frac{k-1}{k} ex_x + \deg_G(x) \leq \frac{k-1}{k} \ex(f) + \Delta.
\]

For any vertex $v \in V(G) \setminus X$, we have $\ex(f^\circ, v) = 0$ and $\ex(f^\circ + g^\circ, v) \leq (k-2)ex(f)$ by Lemma~\ref{L:stage-1-excess}.
This implies that $\ex(f^\circ + g^\circ/k, v) \leq \frac{k-2}{k}ex(f)$.
By Lemma~\ref{L:fractional-alg}, $g_k^\circ(e)$ and $(g^\circ/k)(e)$ differ by at most 1 on every arc $e$.
	There are $\deg_G(v)$ arcs in $G^\circ$ that are incident to at least one vertex in $C_v$, so 
\[
	\ex(f^\circ + g_k^\circ, v) \leq \frac{k-2}{k} \ex(f) + \deg_G(v) \leq \frac{k-1}{k} \ex(f) + \Delta.
\]
\end{proof}

Using the algorithm of Lemma~\ref{L:fractional-alg}, computing $g_k^\circ$ takes $O(n \log^3 n)$ time.  

\subsection{Stage 3}
In this stage, we finally get $f_1^\circ$. 
Using Lemma~\ref{L:cancel-cycles}, we find a flow $f_1^\circ$ of the same value as $f^\circ + g_k^\circ$ such that the restriction $f_1$ of $f_1^\circ$ to $G$ is acyclic.
By Lemma~\ref{L:few-saddles}, $f_1^\circ$ has at most $k-2$ infeasible vertices.
Since $f_1(e) \leq (f^\circ + g_k^\circ)(e)$ for all arcs $e$, we still have $\ex(f^\circ + g_k^\circ, v) \leq \frac{k-1}{k}\ex(f) + \Delta$ for all vertices $v \in V(G)$.
Stage 3 takes $O(n)$ time, so the total running time of stages 1-3 is $O(k^4 n \log^3 n)$.

\section{The case $k = 3$}
In the case of three terminals, we can find a maximum flow in $O(n \log n)$ time even if $G$ has arbitrary real capacities.
Without loss of generality, we may assume that there are two sources and one sink.
Let $f^\circ$ be a maximum flow in $G^\circ$.
We can compute $f^\circ$ in $O(n \log n)$ time by using the algorithm of Borradaile and Klein~\cite{BK09}~\cite{E10}: first find a maximum flow $f^\circ_1$ from $s_1$ to $t$, and then find a maximum flow $f_2^\circ$ from $s_2$ to $t$ in the residual graph of $G^\circ$ with respect to $f^\circ_1$.
The desired flow $f^\circ$ is just $f^\circ_1 + f^\circ_2$.
By Lemma~\ref{L:cancel-cycles} we may assume without loss of generality that the restriction $f$ of $f^\circ$ to $G$ is acyclic.
By Lemma~\ref{L:few-saddles}, the flow graph $f_G$ of $f$ has at most one saddle $x$, and has index 1.
If $f$ is feasible at $x$, then $f$ is the maximum flow in $G$, so assume $f$ is infeasible at $x$.

\subsection{Almost-feasible flows}
Let $\delta = val(G^\circ) - val(G)$.
Suppose $f_\delta^\circ$ is a maximum flow in $G^\circ$ whose restriction $f_\delta$ to $G$ is acyclic and has a single infeasible vertex $x_\delta$.
If $\ex(f_\delta,x_\delta) = \delta$, then $f_\delta^\circ$ and $f_\delta$ are {\em almost feasible}.
Given an almost-feasible flow $f_\delta$ in $G$, we can remove $\delta$ units of flow through $x_\delta$ to get a maximum flow in $G$. This can be done in $O(n \log n)$ time using the algorithm of Borradaile and Klein~\cite{BK09} for finding maximum flow in planar graphs, as follows. 
Treat the arc weights in the flow graph $f_{\delta,G}$ of $f_\delta$ as capacities.
In $f_{\delta,G}$, find a flow $g_1$ of value $\min\{f_\delta^{out}(s_1), \delta\}$ from $s_1$ to $x_\delta$, and find a flow $g_2$ of value $\delta - v(g_1)$ from $s_2$ to $x_\delta$ in the residual graph of $f_{\delta,G}$ with respect to $g_1$; the result is that $g_1 + g_2$ is a flow from $s_1$ and $s_2$ to $x_\delta$ of value $\delta$.
Next, in $f_{\delta,G}$, we find a flow $g_3$ of value $\delta$ from $x_\delta$ to $t$. (As usual, to find a flow of a certain value from a source $s$ to a sink $t$, we add a vertex $s'$ that will be the source instead of $s$, and we add an arc $(s', s)$ whose capacity is the desired flow value.)
Since $f_\delta$ is acyclic, the supports of $g_1 + g_2$ and $g_3$ do not share any arcs. 
Then, for any arc $e \in E(G)$, reduce $f_\delta(e)$ by $g_1(e) + g_2(e) + g_3(e)$. 
We have removed $\delta$ units of flow through $x_\delta$ in the flow $f_\delta$.

In this subsection, we show that almost-feasible flows exist.

\begin{theorem} There exists a maximum flow $f^\circ_\delta$ in $G^\circ$ such that the restriction $f_\delta$ of $f^\circ_\delta$ to $G$ is acyclic and has a single infeasible vertex $x$ with $\ex(f_\delta,x) = \delta$. 
\end{theorem}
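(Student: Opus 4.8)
The plan is to prove the matching bounds $\ex(f_\delta,x)\ge\delta$ and $\ex(f_\delta,x)\le\delta$: the first holds for \emph{every} maximum flow in $G^\circ$ whose acyclic restriction has a single infeasible vertex, and the second is realized by one carefully constructed maximum flow.

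First I would establish the lower bound, which is the easy half. Let $f_\delta^\circ$ be any maximum flow in $G^\circ$ whose acyclic restriction $f_\delta$ has a single infeasible vertex $x$ with excess $e=\ex(f_\delta,x)$. Applying the flow-removal procedure described just before the theorem---route $e$ units from the sources to $x$ and $e$ units from $x$ to $t$ inside the acyclic flow graph $f_{\delta,G}$ and subtract them---yields a \emph{feasible} flow in $G$ of value $\val(G^\circ)-e$. Since no feasible flow in $G$ exceeds $\val(G)$ in value, $\val(G^\circ)-e\le\val(G)$, i.e. $e\ge\delta$. Thus $\delta$ is the smallest excess we could hope for.

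For the upper bound I would construct the desired flow. Start from a maximum feasible flow $f^*$ in $G$ (value $\val(G)=\val(G^\circ)-\delta$), extend it to a feasible flow $f^{*\circ}$ in $G^\circ$ via Lemma~\ref{L:extendable}, and augment it by an acyclic residual flow $b^\circ$ of value $\delta$ from the sources to the single sink $t$, so that $F^\circ=f^{*\circ}+b^\circ$ is a maximum flow in $G^\circ$. Using Lemma~\ref{L:cancel-cycles} I would replace $F^\circ$ by a maximum flow of the same value whose restriction $F$ to $G$ is acyclic; cancelling flow-cycles only lowers each $F^{in}(v)$ and hence never raises an excess. Because $k=3$, Lemma~\ref{L:few-saddles} forces $F$ to have at most one saddle, and since $\delta>0$ prevents $F$ from being feasible, there is exactly one infeasible vertex $x$, of index $1$. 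The lower bound already gives $\ex(F,x)\ge\delta$, so it remains to show $\ex(F,x)\le\delta$. The intended mechanism is that, since $f^*$ is feasible, $F^{in}(x)\le c(x)+(\text{net inflow the augmentation adds at }x)$, and an acyclic flow into the \emph{single} sink $t$ of value $\delta$ cannot raise the inflow at any one vertex of $G$ by more than $\delta$ (decompose it into source-to-$t$ paths, each simple and therefore meeting $x$ at most once). This would give $F^{in}(x)\le c(x)+\delta$, hence $\ex(F,x)=\delta$, and $f^\circ_\delta:=F^\circ$ is the required flow.

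The hard part is making this last bound rigorous, and that is where I expect the main obstacle. The augmenting flow $b^\circ$ lives in $G^\circ$, where ``$x$'' is the cycle $C_x$ rather than a single point, and a simple residual path can enter $C_x$ more than once (the index-$1$ structure permits two incoming groups), so a naive count bounds the added inflow only by $2\delta$. The way around this is to measure the added inflow on the \emph{acyclic restriction in $G$}, where $x$ is a single vertex met at most once per decomposition path; but one must then guarantee that the flow obtained after cancelling cycles in $G$ is still a maximum flow of $G^\circ$ (that it remains liftable to $G^\circ$ at value $\val(G^\circ)$). Coupling value-maximality in $G^\circ$ with concentration of the excess at the single index-$1$ saddle---and arguing that the minimum excess over all such single-saddle maximum flows is attained and equals $\delta$---is the delicate step; here the real-capacity assumption helps, since the maximum flows form a polytope and a continuity/compactness argument secures attainment of the minimum.
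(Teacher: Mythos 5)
Your proposal follows essentially the same route as the paper: construct a maximum flow of $G^\circ$ by augmenting an extension of a maximum feasible flow of $G$ with an acyclic residual flow of value $\delta$, cancel flow-cycles to obtain an acyclic restriction with a single infeasible vertex, bound its excess above by $\delta$ from the construction, and below by $\delta$ via the remove-$\ex(f_\delta,x)$-units-and-contradict-maximality argument. The step you flag as delicate (whether the value-$\delta$ augmentation can add more than $\delta$ of inflow at a single vertex) is exactly the one the paper asserts without elaboration, so your extra care there is warranted but does not constitute a different approach.
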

\begin{proof}
Let $f_{max}$ be a maximum flow in $G$, and let $f_{max}^\circ$ be an extension of $f_{max}$ to $G^\circ$.
In the residual graph of $G^\circ$ with respect to $f_{max}^\circ$, find an acyclic maximum flow $g$. Let $(f')^\circ = f^\circ + g$ and let $f'$ be the restriction of $(f')^\circ$ to $G$.
Since $g$ has value $\delta$, the excess of every vertex under $f'$ is at most $\delta$.
Using the algorithm of Lemma~\ref{L:cancel-cycles}, compute a flow $f^\circ_\delta$ with the same value as $(f')^\circ$ whose restriction $f_\delta$ to $G$ does not contain flow-cycles.

Since $g^\circ$ is a maximum flow in the residual graph of $G^\circ$ with respect to $f^\circ$, $(f')^\circ$ and $f_\delta^\circ$ are maximum flows in $G^\circ$. 
Since $f_\delta(e) \leq f'(e)$ for every arc $e$, we have $\ex(f_\delta) \leq \delta$.
Since $f_\delta$ is acyclic, Lemma~\ref{L:cancel-cycles} implies that $f_\delta$ has at most one infeasible vertex $x$.

If $\ex(f_\delta, x) < \delta$, then, starting with $f_\delta$, we can remove $\ex(f_\delta, x)$ units of flow through $x$ to get a feasible flow in $G$ with value strictly higher than $v(f_\delta) - \delta = val(G)$, a contradiction.
Thus $\ex(f_\delta,x) = \delta$.
\end{proof}



\subsection{Getting an almost-feasible flow}
It remains to show how to compute an almost-feasible flow.
We will describe an algorithm that finds a circulation $g^\circ$ in $G^\circ$ such that the restriction of $f^\circ + g^\circ$ to $G$ is almost feasible. Let $ex_x = \ex(f)$.

%
%
%
%

We construct a flow network $H$ from $G^\circ$ and $f^\circ$ in the exact same way as in section~\ref{S:integer-h}.
That is, starting with the residual graph of $G^\circ$ with respect to $f^\circ$, we do the following:
\begin{itemize}
	\item
	Replace $C_x$ with vertices $x^{in}$ and $x^{out}$.
    \item 
    Every arc of capacity $c$ going from a vertex $u$ to a vertex in the cycle $C_x$ is now an arc from $(u, x^{in})$ of capacity $c$
    \item
    Every arc of capacity $c$ going from a vertex in the cycle $C_x$ to a vertex $u$ is now an arc $(x^{out},u)$ of capacity $c$.
	\item
	Let $x^{in}$ be the source and $x^{out}$ be the sink.
\end{itemize}
Lemmas~\ref{L:main-integer} and~\ref{L:main-integer2} both apply.
Thus our goal is now to find a maximum flow $g_H$ in $H$ that can be extended to a circulation in the residual graph of $G^\circ$ with respect to $f^\circ$.

    
    
    
    
We will now show that we can make two simplifications to $H$.
The goal of these simplifications is to eliminate the apices $s, x^{in}$, and $x^{out}$ so that $H$ becomes planar.
First, since our goal is a flow in $H$ from $x^{in}$ to $x^{out}$, we may assume without loss of generality that arcs of the form $(u, x^{in})$ and $(x^{out}, v)$ do not exist in $H$. 
As a result, the only arcs in $H$ that are incident to $x^{in}$ are arcs of the form $(x^{in}, u)$ where $f(u, x) > 0$.
If we consider these arcs as arcs in $G$, then, since $x$ has index 1, these arcs form two intervals in the cyclic order around $x$.
Therefore, we can replace $x^{in}$ with two sources $x^{in}_1$ and $x^{in}_2$, replacing arcs $(x^{in}, u)$ in the first interval with $(x^{in}_1,u)$ and arcs $(x^{in}, u)$ in the second interval with arcs $x^{in}_2$. A similar simplification eliminates $x^{out}$. See Figure~\ref{F:split-terminals}. (We could not perform this simplification in Section~\ref{S:integer-h} because the desired flow in $H$ could send flow from $x^{in}$ to $(x')^{in}$ to $x^{out}$.)
One effect of this simplification is that every flow $g_H$ in $H$ automatically extends to a circulation $g^\circ$ in the residual graph of $G^\circ$ with respect to $f^\circ$.
This is because $f$ has an extension in $G^\circ$ and $(f + g_H)(e) \leq f(e)$ for any arc $e$ incident to $x$ in $G$, so $f + g_H$ has an extension to $G^\circ$.

\begin{figure}
\centering
\begin{tabular}{cr@{\qquad}cr}
	\includegraphics[scale=0.5]{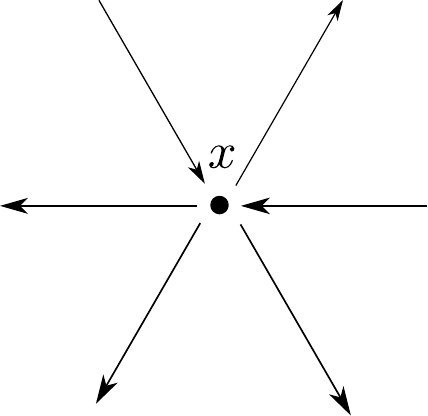} & \hspace{-0.25in} (a)
	&
	\includegraphics[scale=0.5]{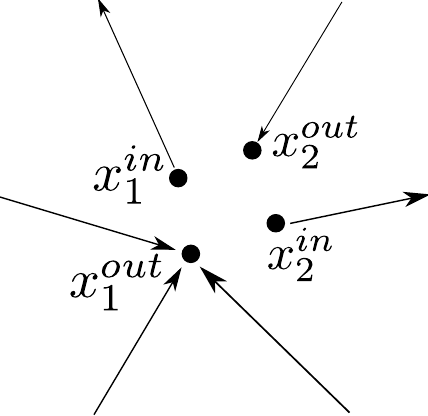} & \hspace{-0.25in}(b)
\end{tabular}
\caption{(a) The flow graph of $f$ at the unique infeasible vertex (b) sources and sinks of $H$}
\label{F:split-terminals}
\end{figure}

Second, we show that we can delete the arcs $(s,s_1)$ and $(s,s_2)$. This eliminates the apex $s$.
\begin{lemma}\label{L:avoid-s}
    If there is an augmenting path in $H$ (i.e., a path from a source to a target in $H$) containing $s$, then there is an augmenting path in $H$ not containing $s$.
\end{lemma}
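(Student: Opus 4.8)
The plan is to show that the only thing a detour through $s$ accomplishes is to transfer ``source supply'' from one real source to the other, and that this transfer can always be realized inside the planar remainder $H-s$. First I would pin down the shape of any augmenting path $P$ that uses $s$. In the residual graph the only arcs incident to $s$ are the forward arcs $(s,s_1),(s,s_2)$, which have infinite residual capacity, and the reverse arcs $(s_1,s),(s_2,s)$, whose residual capacities equal $f^\circ(s,s_1)$ and $f^\circ(s,s_2)$. Since $P$ is simple it must enter $s$ along one reverse arc and leave along a forward arc to a \emph{different} source, so up to symmetry $P = x^{in}_a \xrightarrow{P_1} s_1 \to s \to s_2 \xrightarrow{P_2} x^{out}_b$, where the subpaths $P_1$ and $P_2$ avoid $s$. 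In particular the arc $(s_1,s)$ being residual forces $f^\circ(s,s_1)>0$. It therefore suffices to produce a residual walk from the source side to the sink side of $H$ that bypasses $s$; concatenating $P_1$ with such a connection and with $P_2$ (and extracting a simple path) yields the desired $s$-free augmenting path.

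Next I would record two consequences of $f^\circ$ being a maximum $s$--$t$ flow in $G^\circ$. The residual graph of $G^\circ$ has no $s$--$t$ path, and because each $(s,s_i)$ has infinite capacity this upgrades to the stronger statement that there is \emph{no} residual path from any source $s_i$ to $t$: otherwise $s \to s_i \rightsquigarrow t$ would be a residual $s$--$t$ path. Dually, flow decomposition shows that whenever $f^\circ(s,s_i)>0$ there is an $s$--$t$ flow path $s\to s_i \rightsquigarrow t$, whose reversal is a residual path $t \rightsquigarrow s_i$ that avoids $s$. These facts say that the sources can ``reach back'' from $t$ along reversed flow but can never push forward to $t$, which is exactly the asymmetry I would exploit.

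The main argument is by contradiction via a reachability/min-cut analysis. Assume no augmenting path of $H$ avoids $s$, and let $R$ be the set of vertices reachable from the split sources $x^{in}_1,x^{in}_2$ in the residual graph $H-s$. Then $R$ contains no sink of $H$; every residual arc of $H$ that leaves $R$ must point to $s$ (any other head would itself lie in $R$); moreover $s_1\in R$ because $P_1$ reaches it while avoiding $s$, and $s_2\notin R$ since otherwise $P_2$ would drive the sink $x^{out}_b$ into $R$. I would then combine the reversed-flow residual paths $t\rightsquigarrow s_i$ of the previous step with the fact that no $s_i$ can reach $t$, analyzing the flow of $f^\circ$ across the cut $(R,\overline R)$, to force either a sink or the vertex $s_2$ into $R$ and contradict the standing assumptions. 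Here the planar, index-$1$ placement of the four split terminals $x^{in}_1,x^{in}_2,x^{out}_1,x^{out}_2$ is what guarantees that, once the apex $s$ is removed, the two source supplies can still communicate through the network.

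I expect the last step to be the real obstacle, namely exhibiting the concrete $s$-free connection that realizes the supply transfer forced by $P$. The subtlety is that $s_1$ and $s_2$ are not in general joined by a residual path on their own, and the degenerate case $f^\circ(s,s_2)=0$ — where the arc $s\to s_2$ injects genuinely new supply rather than rerouting existing flow — has to be treated separately. Both cases seem to require using the maximality fact (so that the cut $R$ cannot ``leak'' toward $t$) together with the planar embedding inherited from $G^\circ$, so that the rebalancing of source flow demanded by the path $P$ is carried out inside $H-s$ instead of through $s$.
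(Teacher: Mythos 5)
Your setup is sound: the classification of how a simple augmenting path can pass through $s$ (enter on a reverse arc $(s_1,s)$, leave on a forward arc $(s,s_2)$, forcing $f^\circ(s,s_1)>0$) matches the situation the paper handles, and your observations about maximality of $f^\circ$ are correct. But the proposal stops exactly where the proof has to start. The reachability/cut argument is stated only as a plan (``I would then combine \dots to force either a sink or the vertex $s_2$ into $R$''), and you yourself flag the production of the concrete $s$-free connection as ``the real obstacle.'' That connection is the entire content of the lemma, and you never establish the one fact that makes it exist: that $s_2$ sends flow \emph{into the saddle $x$ itself}, not merely into the network. Without that, there is no reason a residual path from $x^{in}$ to $s_2$ in $H-s$ should exist at all, and your cut analysis has nothing to leverage --- note that flow conservation across $(R,\overline R)$ alone cannot suffice, since the statement is false for general (non-planar, non-index-$1$) configurations.

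The paper closes this gap with a topological argument you allude to but do not carry out. Since $x$ has index $1$, its flow-carrying incident arcs form two incoming sectors separated by two outgoing sectors. Taking flow-carrying paths $P,P'$ from $x$ to $t$ out of the two outgoing sectors, their initial portions up to their first common vertex bound two regions of the plane, each containing one incoming sector; acyclicity then forces one source into each region, so \emph{both} $s_1$ and $s_2$ have flow-carrying paths to $x$. Reversing the flow-carrying path $Q_s$ from $s_2$ to $x$ gives a residual path from $x^{in}$ to $s_2$ that avoids $s$, and splicing it onto the tail of $\pi$ from $s_2$ to $x^{out}$ yields the $s$-free augmenting path directly --- no contradiction or min-cut argument is needed. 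To repair your proposal you would need to supply this region/sector argument (or an equivalent use of the index-$1$ planar structure); as written, the step that does all the work is missing.
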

\begin{proof}
	See Figure~\ref{F:avoid-s}. Consider two arcs $e$ and $e'$ carrying flow out of $x$ such that as we cyclically traverse the arcs incident to $x$ in clockwise order, some arc between $e$ and $e'$ carries flow into $x$, and some arc between $e'$ and $e$ carries flow into $x$.
    There must be a path $P$ from $x$ to $t$ starting with $e$ that carries flow.
    Similarly, there must be a path $P'$ from $x$ to $t$ starting with $e'$ that carries flow.
    Without loss of generality, assume $P$ and $P'$ do not cross. Let $u$ be the first vertex on $P$ after $x$ that also appears on $P'$.
    The vertex $u$ must also be the first vertex on $P'$ after $x$ that also appears on $P$, because otherwise $f$ has flow-cycles. 
    Let $Q$ be the prefix of $P$ that ends at the arc of $P$ that goes into $u$, and let $Q'$ be the prefix of $P'$ that ends at the arc of $P'$ that goes into $u$. 
    These prefixes are well-defined because $f$ is acyclic.
    Since both $Q$ and $Q'$ go from $x$ to $u$, their union partitions the plane into two regions.
    Denote the inner region by $R$ and the outer region by $R'$.
    
    Since there are arcs in both $R$ and $R'$ carrying flow into $x$ and $f$ is acyclic, one source must be in $R$ and the other must be in $R'$.
    Furthermore, there is some path $Q_s$ from $s_2$ to $x$ carrying flow, and there is some path from $s_1$ to $x$ carrying flow.
   
    Without loss of generality, suppose the augmenting path $\pi$ in $H$ starts in $x^{in}$, goes to $s_1 \in R$, uses arcs $(s_1, s)$ and $(s, s_2)$, and ends by going from $s_2 \in R'$ to $x^{out}$.
    We can replace it by an augmenting path $\pi'$ that starts at $x^{in}$, follows $rev(Q_s)$ to $s_2$, and then follows $\pi$ from $s_2$ to $x^{out}$.
    The augmenting path $\pi'$ does not contain $s$.
    
\begin{figure}
\centering
\includegraphics[scale=0.5]{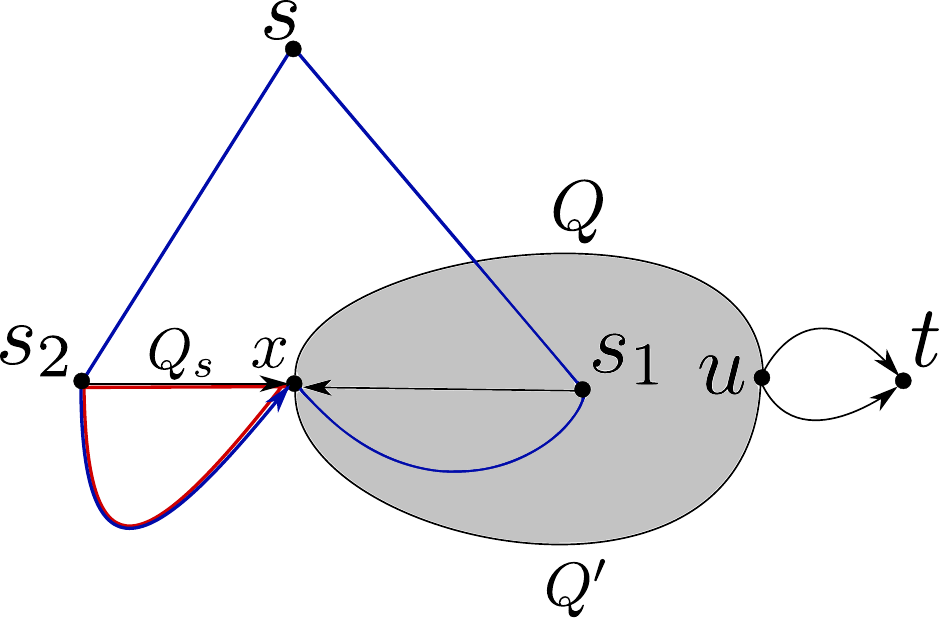}
\caption{$H$ in the proof of Lemma~\ref{L:avoid-s} with all terminals merged into the vertex $x$. The blue path is $\pi$. The red path is $\pi'$. The shaded region is $R$.}
\label{F:avoid-s}
\end{figure}
\end{proof}

Let $g_H$ be the maximum flow in $H$ and let $g^\circ$ be its extension to $G^\circ$.
We apply Lemma~\ref{L:cancel-cycles} to find a flow $f^\circ_3$ with the same value as $f^\circ + g^\circ$ whose restriction $f_3$ has no flow-cycles.

By Lemma~\ref{L:few-saddles}, the flow $f^\circ_3$ is infeasible at a single vertex $y$.
If $y=x$, then $f^\circ_3$ must be almost feasible. 
This is because Lemma~\ref{L:main-integer} implies that if $g_H$ is a maximum flow in $H$, then $f^\circ + g^\circ$ is a maximum flow in $G^\circ$ that minimizes the excess of $x$.
Furthermore, $f_3(e) \leq (f^\circ + g^\circ)(e)$, so $f_3$ is a maximum flow in $G^\circ$ that minimizes the excess of $x$.

If $y \neq x$, then we define a function $F : E(G) \times [0, 1] \to \mathbb{R}$ for each arc $e \in G$. $F(e, \beta)$ is defined as follows. We apply Lemma~\ref{L:cancel-cycles} to $f^\circ + \beta g^\circ$ to get a flow $f_\beta^\circ$ whose restriction $f_\beta$ to $G$ is acyclic.
We then define $F(e, \beta) = f_\beta(e)$.
For all arcs $e \in E(G)$, we have $F(e, 0) = f(e)$ and $F(e, 1) = f_3(e)$.

Clearly, $F(\cdot, \beta)$ has an extension that is feasible in $G^\circ$ for all $\beta$, and $F(e, \cdot)$ is continuous for any arc $e \in E(G)$.
Consider how $F(\cdot, \beta)$ changes as $\beta$ increases from 0 to $1$.
We start with excess on $x$ and no other vertices, and end with excess on $y$ but no other vertices.
Moreover, no matter what $\beta$ is, there is at most one infeasible vertex.
Thus, at some point, say when $\beta = \beta_0$, we must have no infeasible vertices.
Since $F(\cdot, \alpha_0)$ is a maximum flow in $G^\circ$, it must be a maximum flow in $G$.

To compute $\beta_0$, we need the following lemma.
\begin{theorem}\label{L:constant-derivative}
For every fixed arc $e \in E(G)$, $\frac{\partial F(e, \beta)}{\partial \beta}$ is constant.
\end{theorem}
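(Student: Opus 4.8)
The plan is to prove the equivalent statement that $F(e,\cdot)$ is affine on $[0,1]$ for each arc $e$; since $F(e,0)=f(e)$ and $F(e,1)=f_3(e)$, this yields the explicit formula $F(e,\beta)=(1-\beta)f(e)+\beta f_3(e)$ and in particular a constant derivative. I would work with net flows $\nu_\phi(e):=\phi(e)-\phi(\rev(e))$, which add linearly, so that the restriction of $f^\circ+\beta g^\circ$ to $G$ has net flow $\nu_f(e)+\beta\nu_g(e)$, an affine function of $\beta$ on every arc. Because an acyclic flow is the reduced flow $\max\{0,\nu(\cdot)\}$ of its net flow, and cancelling flow-cycles only subtracts nonnegative directed cycles, we have $F(e,\beta)=\max\{0,\nu_f(e)+\beta\nu_g(e)-w_\beta(e)\}$, where $w_\beta$ denotes the net flow of the circulation removed by Lemma~\ref{L:cancel-cycles} at parameter $\beta$. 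The whole problem is thus to control $w_\beta$ and the outer $\max$.

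The first and most important step is a sign-stability lemma. Since $g^\circ$ is a circulation in the residual graph of $G^\circ$ with respect to $f^\circ$, the amount by which $g^\circ$ can reverse any arc is bounded by the forward flow of $f^\circ$ on that arc: if $f^\circ(e)\ge 0=f^\circ(\rev(e))$ then $g^\circ(\rev(e))\le f^\circ(e)$, whence $\nu_f(e)+\beta\nu_g(e)\ge(1-\beta)f^\circ(e)\ge 0$ for all $\beta\in[0,1]$. Thus on every arc the affine function $\nu_f(e)+\beta\nu_g(e)$ keeps a constant sign throughout $[0,1]$, so its reduced flow $\max\{0,\nu_f(e)+\beta\nu_g(e)\}$ is itself affine, and the support of the restriction of $f^\circ+\beta g^\circ$ to $G$ is a fixed digraph $D$ for every $\beta\in(0,1)$. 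This already settles the claim on every arc that is never touched by cycle-cancelling, and it shows that the set of directed cycles available for cancellation does not change with $\beta$.

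It remains to handle the cancelled circulation $w_\beta$. Here I would use that $D$ is combinatorially fixed on $(0,1)$ together with the affinity and constant sign of $\nu_f+\beta\nu_g$ on the arcs of $D$ to argue that the canonical (planar, leftmost) cancellation of Lemma~\ref{L:cancel-cycles} removes the same cycles in the same proportions for every $\beta$, giving $w_\beta=\beta\,w$ for a single fixed circulation $w$. The single index-one saddle of $f_G$ guaranteed by Lemma~\ref{L:few-saddles} is what makes this plausible: every directed cycle that the added circulation $g^\circ$ can create must enclose the saddle $x$ with the same orientation, so these cycles form one nested family and are cancelled coherently rather than in competing directions. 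Feeding $w_\beta=\beta w$ back into the formula, and using that $F(e,\beta)$ equals the affine quantity $\nu_f(e)+\beta\nu_g(e)-\beta w(e)$ and is nonnegative at both endpoints (as $f$ and $f_3$ are genuine reduced flows), a monotone affine function cannot change sign in the interior, so the outer $\max$ is inactive on $[0,1]$ and $F(e,\cdot)$ is affine.

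The main obstacle is exactly this last step: proving that the cancellation is proportional to $\beta$, i.e.\ that as $\beta$ increases the canonical procedure never switches which arc attains the minimum residual along a cancelled cycle (a switch would create a kink). I expect to prove this by showing that such a switch would force a second saddle to appear in $f_\beta$ at an intermediate $\beta$, contradicting Lemma~\ref{L:few-saddles} (at most one saddle for every acyclic $f_\beta$), and by appealing to the determinism of the procedure of Lemma~\ref{L:cancel-cycles}, which assigns a unique acyclic flow to each $f^\circ+\beta g^\circ$ and hence, on the fixed support $D$, depends affinely on $\beta$. Once $w_\beta=\beta w$ is established, the formula $F(e,\beta)=(1-\beta)f(e)+\beta f_3(e)$ follows and $\partial F(e,\beta)/\partial\beta$ is constant.
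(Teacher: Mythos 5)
There are two genuine gaps. First, your ``sign-stability lemma'' is false as stated. You claim that if $f^\circ(e)\ge 0=f^\circ(\rev(e))$ then $g^\circ(\rev(e))\le f^\circ(e)$, but $g^\circ$ is only constrained to be feasible in the residual graph, and the residual capacity of $\rev(e)$ is $c(\rev(e))-f^\circ(\rev(e))+f^\circ(e)=c(\rev(e))+f^\circ(e)$, which exceeds $f^\circ(e)$ whenever $c(\rev(e))>0$ (and the paper assumes $\rev(e)\in E(G)$ with positive capacity for every arc). So $g^\circ$ can reverse the net direction of flow on an arc, the affine quantity $\nu_f(e)+\beta\nu_g(e)$ can change sign on $(0,1)$, and your conclusion that the support of the restriction of $f^\circ+\beta g^\circ$ to $G$ is a fixed digraph $D$ does not follow. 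Everything downstream that relies on a combinatorially fixed support therefore needs a different justification.

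Second, and more fundamentally, the crux of the theorem --- that the circulation removed by the cancellation procedure depends affinely on $\beta$ --- is asserted rather than proved. Your proposed mechanisms (a ``switch'' in the minimizing arc of a cancelled cycle forcing a second saddle, or all created flow-cycles forming a nested family around $x$) are not substantiated, and they mischaracterize the procedure: Lemma~\ref{L:cancel-cycles} is not a leftmost cycle-cancelling scheme but a potential-based construction, which assigns to each face a distance $\Phi(h)$ from the infinite face in the dual of the residual graph and sets the correction circulation to $\Phi(h_r)-\Phi(h_\ell)$ on each arc. The paper's proof exploits exactly this structure: it shows the dual shortest-path trees do not change with $\beta$, because if two dual paths between the same endpoints changed length at different rates, the net flow of $f^\circ+\beta g^\circ$ into the region they bound would be changing, contradicting that $g^\circ$ (and later $g_\beta$) is a \emph{simple} circulation. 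Once the trees are fixed, the potentials, hence the correction circulations, hence $F(e,\cdot)$, are affine in $\beta$. Without an argument at this level of specificity --- in particular, without using simplicity of the circulation to pin down the dual shortest-path structure --- your proof does not go through, even though your high-level target ($F(e,\beta)=(1-\beta)f(e)+\beta f_3(e)$) is the right one.
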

\begin{proof}
The proof requires understanding the details of the algorithm of Lemma~\ref{L:cancel-cycles}, which can be found in Appendix~\ref{A:cancel-cycles}.
Here we summarize how the flow $F(\cdot, \beta)$ is computed: 
\begin{enumerate}
	\item Compute $f^\circ_\beta = f^\circ + \beta g^\circ$.
    Define a capacity function $c'$ by $c'(e) = f^\circ_\beta(e)$ for all $e \in E(G)$ and $c'(e) = c(e)$ for all $e \notin E(G)$.
    Construct the residual graph $G^\circ_\beta$ of $G^\circ$ with respect to $f^\circ_\beta$ and $c'$.
    Let $h_\infty$ be the infinite face of $G^\circ \setminus \{s\}$.
	For each face $h$ of $G^\circ_\beta \setminus \{s\}$, let $\Phi(h)$ be the distance of $h^*$ from $h_\infty^*$ in $(G^\circ_\beta \setminus \{s\})^*$. For each arc $e$ in $G^\circ \setminus \{s\}$, let $h_\ell$ be the face on the left of $e$ and let $h_\ell$ be the face on the right. Let $g_\beta(e) = \Phi(h_r) - \Phi(h_\ell)$ for each arc $e$ in $G^\circ \setminus \{s\}$; $g_\beta$ is a simple circulation.
	Finally, let $f_\gamma^\circ = f^\circ_\beta + g_\beta$, and let $f_\gamma$ be the restriction of $f_\gamma^\circ$ to $G$.
	The flow $f_\gamma$ has no counter-clockwise flow-cycles. 	
    \item 
    Define a new capacity function $c''(e) = f^\circ_\gamma(e)$ for $e \in E(G)$ and $c''(e) = c(e)$ for $e \notin E(G)$.
	Construct the residual graph $G^\circ_\gamma$ of $G^\circ$ with respect to $c''$ and $f^\circ_\gamma$.
	For each face $h$ of $G^\circ_\gamma \setminus \{s\}$, let $\Phi(h)$ be he distance of $h^*$ from $h_\infty^*$ in $(G^\circ_\gamma \setminus \{s\})^*$.
    Let $g_\gamma(e) = \Phi(h_\ell) - \Phi(h_r)$. 
	Finally, $F(\cdot, \beta)$ is the restriction of $f^\circ_\gamma + g_\gamma$ to $G$.
\end{enumerate}

It suffices to show that the shortest path trees $T_\beta$ in $(G^\circ_\beta \setminus \{s\})^*$ and $T_\gamma$ in $(G^\circ_\gamma \setminus \{s\})^*$ rooted at $h_\infty^*$ do not change as $\beta$ increases. Suppose for the sake of argument that $T_\beta$ changes as $\beta$ increases.
Then, there exist vertices $u^*$ and $v^*$ in $(G^\circ_\beta \setminus \{s\})^*$ and two internally disjoint paths $P_1^*$ and $P_2^*$ from $u^*$ to $v^*$ in $(G^\circ_\alpha \setminus \{s\})^*$ whose lengths are changing at different rates as $\beta$ increases.
Let $H$ be the region bounded by $P_1^*$ and $P_2^*$, and suppose that $P_1 \circ rev(P_2)$ is a clockwise cycle.
The change in the length of $P_1^*$ in $(G^\circ_\beta \setminus \{s\})^*$ is the change in the capacity of the cut $P_1$ in $G^\circ_\beta \setminus \{s\}$, which is the change in the amount of flow $f^\circ + \beta g^\circ$ sends out of $H$ through the arcs of $P_1$. 
Similarly, the change in the length of $P_2^*$ is the change in the amount of flow $f^\circ + \beta g^\circ$ sends into $H$ through the arcs of $P_2$.
This means that the net amount of flow that $f^\circ + \beta g^\circ$ carries into $H$ is changing as $\beta$ increases, but this is impossible, since $g^\circ$ is a simple circulation.
We conclude that $T_\beta$ does not increase as $\beta$ increases.
A similar argument shows that since $g_\beta$ is a simple circulation and $f_\gamma^\circ = f^\circ + \beta g^\circ + g_\beta$, $T_\gamma$ does not change as $\beta$ increases. 


\end{proof}

The previous lemma implies that $\frac{d}{d\beta} ex(F(\cdot, \beta),x)$ 
is constant, and we can find it because
\[
	\frac{d}{d\beta} ex(F(\cdot, \beta),x)  = \ex(F(\cdot, 1),x) - \ex(F(\cdot, 0),x) = \ex(f_3,x) - \ex(f,x),
\]
We then let
\[
	\beta_0 = -\frac{\ex(F(\cdot, 0),x)}{\frac{d}{d\beta} \ex(F(\cdot, \beta),x)}.
\]
and $F(\cdot, \beta_0)$ is a maximum flow in $G$.

The algorithm takes $O(n \log n)$ time to compute $f^\circ$.
It takes $O(n \log n)$ time to compute $g_H$, from which we can obtain $g^\circ$, $f_3^\circ$, and $f_3$ in linear time.
If $y = x$, then we have an almost-feasible flow that can be turned into a maximum flow in $G$ in $O(n \log n)$ time. If $y \neq x$, then we can compute $\beta_0$ and $F(\cdot, \beta_0)$ in linear time.
The entire algorithm takes $O(n \log n)$ time.
\subsection{Discussion}
One natural question is what happens when $k = 4$.
Here, we can define a maximum flow in $G^\circ$ as being almost feasible if we can remove $\delta$ units of flow to get a feasible flow in $G$.
We can also prove that almost-feasible flows always exist.
The main problem seems to be that there is no easy way of characterizing or getting almost-feasible flow. 
For example, minimizing the sum of the excesses of the two infeasible vertices $x$ and $x'$ does not necessarily work.
Suppose there is one flow where the infeasible vertices both have excesses of 10, and another flow where the excesses are both 7. 
If $\delta = 10$, then it could be the case that the first flow is almost feasible because removing a unit of flow through $x$ may simultaneously remove a unit of flow through $x'$ (i.e., we can decompose the first flow into paths and cycles such that some paths pass through both $x$ and $x'$), while the second flow is not almost feasible because removing a unit of flow through $x$ does not simultaneously remove a unit of flow through $x'$, and vice versa.
\bigskip
\noindent {\bf Acknowledgments.} I would like to thank Jeff Erickson for helpful discussions and for comments on an earlier draft of this paper.

\bibliography{references}{}
\bibliographystyle{plain}

\begin{appendices}
\section{Proof of Lemma~\ref{L:fractional}}\label{A:fractional}
For every arc $e$ in $G^\circ$, let $\delta(e)$ be the fractional part of $f^\circ(e)$.
We define the {\em fractional residual graph} of $G^\circ$ with respect to $f^\circ$ as follows: starting with the vertices of $G^\circ$, for every arc $e$ with $\delta(e) > 0$, we add an arc $e$ with capacity $1 - \delta(e)$ and edge $rev(e)$ with capacity $\delta(e)$; these are the only arcs in the fractional residual graph.
Let $H$ be the fractional residual graph of $G^\circ$ with respect to $f^\circ$.
The following algorithm finds $f_1$:
\begin{itemize}
	\item While $H$ contains at least one arc: 
    \begin{itemize}
    	\item Find a feasible flow-cycle $g^\circ$ in $H$ such that $g^\circ$ saturates some arc $e$ in $H$.
        \item Replace $f^\circ$ with $f^\circ + g^\circ$, and update $H$ accordingly.
	\end{itemize}
    \item Set $f_1 = f$.
\end{itemize} 
We need to show that if $H$ contains at least one arc, then it contains a cycle.
Suppose, for the sake of argument, that $H$ is a tree.
Let $P = (u_1, v_1), \dots, (u_{p}, v_p)$ be a maximal path in $H$, meaning that there is no arc entering $u_1$ from a vertex outside $P$. 
This means that in $f^\circ$, $(u_1, v_1)$ is the only arc incident to $u_1$ carrying fractional flow.
If $u_1$ is one of the terminals $s$ or $t$, then this violates the fact that $v(f^\circ)$ is an integer. 
If $u_1$ is not a terminal, then this violates the fact that flow is conserved at $u_1$.
In both cases, we get a contradiction, so $H$ has a cycle.

Every time we update $f^\circ$, at least two arcs, namely the arcs $e$ and $rev(e)$, disappear from the fractional residual graph, and we never add any arcs to the fractional residual graph, so the algorithm eventually terminates.
When the fractional residual graph has no arcs, $f$ is an integer flow.
Also, for every arc $e$ in $G^\circ$, $f^\circ(e)$ never decreases below $\lfloor f^\circ(e) \rfloor$ or increases above $\lceil f^\circ(e) \rceil$, so $|f^\circ(e) - f_1^\circ(e)| < 1$ for all arcs $e$.

\section{Proof sketch of Lemma~\ref{L:cancel-cycles}}\label{A:cancel-cycles}
The purpose of this section is to describe the algorithm of Lemma~\ref{L:cancel-cycles}. This is needed for the proof of Lemma~\ref{L:constant-derivative}. We will not prove the correctness of the algorithm, as that has been done elsewhere~\cite{KNK93}~\cite{KN11}.
\subsection{Duality}
First, we need a few standard definitions.
If $G$ is a planar graph, the {\em dual graph} $G^*$ of $G$ has a vertex $h^*$ for every face $h$ of $G$, and an arc $e^*$ for every arc $e$ of $G$.
The arc $e^*$ is directed from the vertex of $G^*$ corresponding to the face in $G$ on the left side of $e$, to the vertex of $G^*$ corresponding to the face in $G$ on the right side of $e$.
If $e$ is undirected, then so is $e^*$.
Any undirected edge $\{u,v\}$ can be represented by two directed arcs $(u,v)$ and $(v,u)$, each with the same weight as $\{u,v\}$.
We put lengths $\ell(e^*)$ on the edges $e^*$ of $G^*$ as follows: $\ell(e^*) = c(e)$ for every $e \in E(G)$.

\subsection{Algorithm description}
The algorithm has three steps and is based on an algorithm of Khuller, Naor, and Klein~\cite{KNK93} that finds a circulation without clockwise residual cycles in a directed planar graph in $O(n)$ time.

{\bf Finding a circulation without clockwise residual cycles.}
We describe the algorithm of Khuller, Naor, and Klein that finds a circulation $g$ in $G^\circ$ without clockwise residual cycles~\cite{KNK93}.

The graph $G^\circ \setminus \{s,t\}$ is planar. Let $h_\infty$ be the infinite face of $G^\circ \setminus \{s,t\}$, and let $h_\infty^*$ be its dual vertex.
Using the algorithm of Henzinger et al.~\cite{HKRS97}, compute the shortest path tree rooted at $h^*_\infty$ in $(G^\circ \setminus \{s,t\})^*$ in $O(n)$ time.
For every face $h$ of $G$, let $\Phi(h)$ be the distance in $(G^\circ \setminus \{s,t\})^*$ from $h_\infty^*$ to $h^*$.
For any edge $e \in E(G^\circ \setminus \{s,t\})$, we define $g(e)$ as follows.
Let $h_\ell$ be the face on the left of $e$ and $h_r$ be the face on the right of $e$.
If $\Phi(h_r) \geq \Phi(h_\ell)$, then set $g(e) = \Phi(h_r) - \Phi(h_\ell)$.
Otherwise, set $g(e) = 0$ (and $g(rev(e))$ will automatically be set to $\Phi(h_\ell) - \Phi(h_r))$.
Khuller, Naor, and Klein~\cite{KNK93} proved that the resulting flow function $g$ is a simple circulation in $G^\circ$ such that $G^\circ$ has no clockwise residual cycles with respect to $g$.




{\bf Finding a flow without clockwise residual cycles.}
Let $f^\circ$ be a feasible flow in $G^\circ$.
We describe an algorithm due to Kaplan and Nussbaum~\cite{KN11} that computes a flow $f_1^\circ$ in $G^\circ$ with the same value as $f^\circ$ and without clockwise residual cycles.
A symmetric algorithm can then compute a flow in $G^\circ$ with the same value as $f^\circ$ and without counterclockwise residual cycles.

Let $G^\circ_{f}$ be the residual graph of $G^\circ$ with respect to $f^\circ$.
Using the algorithm of step 1, find a circulation $g$ in $G^\circ_{f}$ such that $G^\circ_{f}$ does not have clockwise residual cycles with respect to $g$.
Now define $f^\circ_1 = f^\circ + g$.
Computing $f_1^\circ$ takes $O(n)$ time.
Kaplan and Nussbaum showed that $f_1^\circ$ is a feasible flow in $G^\circ$ with the same value as $f^\circ$ and without clockwise residual cycles~\cite{KN11}.


{\bf Finding an acyclic flow.}
Finally, let $f^\circ$ be a feasible flow in $G^\circ$.
We describe the algorithm due to Kaplan and Nussbaum~\cite{KN11} that computes a flow of the same value as $f^\circ$ whose restriction to $G$ is acyclic.
We will do this by first eliminating counterclockwise flow-cycles to get a flow $f^\circ_1$; a symmetric algorithm then eliminates clockwise flow-cycles.
 
Define a new capacity function $c_1$ on the arcs of $G^\circ$ by first setting $c_1(e) = f^\circ(e)$ for $e \in E(G)$.
This will ensure that we do not increase the flow along any arc of $G$. 
All other arcs in $G^\circ$ are in $C_v$ for some vertex $v$; for these arcs $e$ we set $c_1(e) = c(e) = c(v)/2$.
Now we apply the previous algorithm to $G^\circ$ and $c_1$ to find a flow $f^\circ_1$ with the same value as $f^\circ$ such that there are no clockwise residual cycles in $G^\circ$ with respect to $f^\circ_1$ and $c_1$.
Kaplan and Nussbaum~\cite{KN11} showed that	the restriction of $f^\circ_1$ to $G$ does not contain counterclockwise flow-cycles.

We now repeat the previous procedure symmetrically, by defining a new capacity $c_2$ that restricts the flow on every arc $e$ of $G$ to be at most $f^\circ_1(e)$, and finding a circulation in $G^\circ$ without counterclockwise residual cycles.
This way we get from $f^\circ_1$ a flow $f^\circ_2$ of the same value whose restriction to $G$ does not contain clockwise flow-cycles in $G$.
For every $e \in E(G)$, we have $f^\circ_2(e) \leq f^\circ_1(e) \leq f^\circ(e)$, so we did not create any new flow-cycles when going from $f^\circ$ to $f^\circ_1$ to $f^\circ_2$.
Thus $f^\circ_2$ is a feasible flow in $G^\circ$ with the same value as $f^\circ$ whose restriction to $G$ is feasible and acyclic.

\section{Proof of Lemma~\ref{L:few-saddles}}\label{A:saddles-and-excess}
In this section, we prove Lemma~\ref{L:few-saddles}.

First we need a few definitions. For any face $\phi$ in $f_G$, let $\alpha(\phi)$ denote the alternation number of $\phi$; $\alpha(\phi)$ is the number of times the arcs on the boundary of $\phi$ change direction as we traverse this boundary.
Thus $\alpha(\phi) = 0$ if the arcs on the boundary of $\phi$ form a directed cycle. 
We use $index(\phi)$ to denote the index of a face $\phi$, which is defined by $index(\phi) = \alpha(\phi)/2 - 1$.

Now we can proceed with the proof. 
See Figure~\ref{F:pie}. 
If at each vertex $v$ in $f_G$ we cycle through its incident arcs in order according to the embedding of $f_G$, each transition from one arc $e$ to the next arc $e'$ results in exactly one alternation either for $v$ or for the face on whose boundary the two arcs $e$ and $e'$ lie. Thus
    \begin{align*}
    	2E &= \sum_{v \in V(f_G)} \alpha(v) + \sum_{\phi \in F(f_G)} \alpha(\phi)\\
        \implies E &= \sum_{v \in V(f_G)} (index(v) + 1) + \sum_{\phi \in F(f_G)} (index(\phi) + 1)\\
        \implies E &= \sum_{v \in V(f_G)} index(v) + \sum_{\phi \in F(f_G)} index(\phi) + V + F\\
        \implies -2 &= \sum_{v \in V(f_G)} index(v) + \sum_{\phi \in F(f_G)} index(\phi)
    \end{align*}
    where in the last line we have used Euler's formula $V(f_G) - E(f_G) + F(f_G) = 2$.
    Since $f_G$ is acyclic, $index(\phi) \geq 0$ for each face $\phi$, so $-2 \geq \sum_{v \in V(f_G)} index(v)$.
    Finally, $index(v) = -1$ for each terminal $v$, so
    \[
    	k_1 + k_2 - 2 \geq \sum_{v: index(v) \geq 1} index(v).
    \]
    A vertex $v$ is a saddle if and only if $index(v) \geq 1$, so this shows that the sum of the indices of the saddles in $f_G$ is at most $k_1 + k_2 - 2$.
    \begin{figure}
\centering
\includegraphics[scale=0.5]{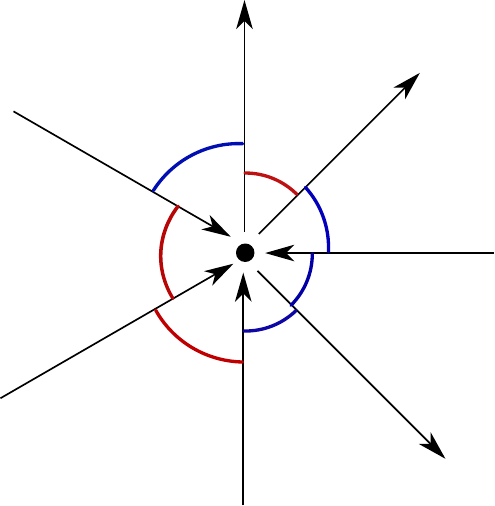}
\caption{Proof of Lemma~\ref{L:few-saddles}. Blue transitions contribute one alternation to a vertex; red transitions contribute one alternation to a face.}
\label{F:pie}
\end{figure}

\end{appendices}
\end{document}